\newtheorem{theorem}{Theorem}[section]
\newtheorem{claim}[theorem]{Claim}
\newtheorem{lemma}[theorem]{Lemma}
\newtheorem{corollary}[theorem]{Corollary}
\newcommand{\monotonesweep} {{\sf monotone-sweep}}
\newcommand{\treealgo} {{\sf tree-algo}}
\newcommand{\algotwog} {{\sf additive-error-algo}}
\newcommand{\naivealgo} {{\sf naive-algo}}
\newcommand{\optspanningtreealgo} {{\sf opt-spanning-tree-algo}}
\newcommand{\splitsweep} {{\sf split-tree-algo}}
\newcommand{\minMSet} {{\sf minMset}}
\newcommand{\minSMSet} {{\sf minSMset}}
\newcommand{\myparagraph}[1] {{\vspace*{0.08in}\noindent{\bf #1}~}}
\title{Minimum Monotone Tree Decomposition of Density Functions Defined on Graphs}
\author{%
  Lucas Magee,%
  \thanks{Department of Computer Science and Engineering, University of California, San Diego, 
          lmagee@ucsd.edu}\,
  Yusu Wang%
  \thanks{Hal{\i}c{\i}o\u{g}lu Data Science Institute, University of California, San Diego,
          yusuwang@ucsd.edu}\
}
\begin{document}
\maketitle

\begin{abstract}

Monotone trees - trees with a function defined on their vertices that decreases the further away from a root node one travels, are a natural model for a process that weakens the further one gets from its source.  Given an aggregation of monotone trees, one may wish to reconstruct the individual monotone components.  A natural representation of such an aggregation would be a graph.  While many methods have been developed for extracting hidden graph structure from datasets, which makes obtaining such an aggregation possible, decomposing such graphs into the original monotone trees is algorithmically challenging. 

Recently, a polynomial time algorithm has been developed to extract a minimum cardinality collection of monotone trees (M-Tree Set) from a given density tree - but no such algorithm exists for density graphs that may contain cycles. 
In this work, we prove that extracting such minimum M-Tree Sets of density graphs is NP-Complete.  
We additionally prove three additional variations of the problem - such as the minimum M-Tree Set such that the intersection between any two monotone trees is either empty or contractible (SM-Tree Set) - are also NP-Complete.
We conclude by providing some approximation algorithms, highlighted by a 3-approximation algorithm for computing the minimum SM-Tree Set for density cactus graphs.
\end{abstract}

\section{Introduction}
\label{sec:Introduction}

A common problem in modern data analysis is taking large, complex datasets and extracting simpler objects that capture the true nature and underlying structure. In this paper we are interested in the case when the input data is the aggregation of a collection of trees. In fact, each tree also has attributes over nodes (e.g., the strength of certain signal) which decreases monotonically from its root -- we call such a tree a monotone tree.  
Such trees come naturally in modeling a process that dissipates as it moves away from the root. One such example is in the construction of neuronal cells: a single neuron has tree morphology, with the cell body (soma) serving as the root. In (tracer-injection based) imaging of brains, the signal often tails off as it moves away from the cell body and out of the injection region, naturally giving rise to a rooted monotone tree.  Figure \ref{fig:monotone_tree_examples} (D) is centered around the soma of a single neuron within a full mouse brain imaging dataset with branches that get weaker as they get further from the soma.

Generally, we are interested in the following: given input data that is the aggregation of a collection of monotone trees, we aim to reconstruct the individual monotone trees. 
The specific version of the problem we consider in this paper is where the input data is a graph $G = (V, E)$ with a density function $f: V \to \mathbb{R}^{\geq 0}$ defined on its vertices. Our goal is to decompose $(G, f)$ into a collection of monotone trees $(T_1, f_1), \ldots, (T_k, f_k)$ whose union sums to the original $(G, f)$ at each $v \in V$. See Section \ref{sec:Preliminaries} for precise definitions. 
A primary motivation for considering graphs to be the input is because graphs are flexible and versatile, and recently, a range of methods have been proposed to extract the hidden graph structure from a wide variety of datasets; see e.g., \cite{Hastie84,Kegl02, Ozertem11, Aanjaneya11, 2011MNRAS,GSBW11,LRW14,CHS15,WWL15,DWW18,Magee22}.
In the aforementioned example of neurons, the discrete Morse-based algorithm of \cite{DWW18} has been applied successfully to extract a graph representing the summary of a collection of neurons \cite{Dingkang_skel, BMWL20}.  To extract the individual neurons from such a summary would be a significant achievement for the neuroscience community - which has developed many techniques to extract individual neuron skeletonizations from imaging datasets; see e.g., \cite{Hang_GTree2018, Quan_NeuroGPSTree_2016, smarttracing}.
However, going from a graph to a collection of trees poses algorithmic challenges.

The monotone-tree decomposition problem has been studied in the work of  \cite{baryshnikov2018minimal}, which develops a polynomial-time algorithm for computing the minimum cardinality set of monotone trees (M-Tree Set) of a density function defined on {\bf a tree} (instead of a graph).  However many applications for such a decomposition have graphs that may contain cycles, with the authors of \cite{baryshnikov2018minimal} explicitly mentioning a need for algorithms that can handle such input domains.

\myparagraph{New work.} We consider density functions defined on graphs, which we refer to as \emph{density graphs}. Our goal is to decompose an input density graph $(G, f)$ into as few monotone trees as possible, which we call the \emph{minimum M-Tree Decomposition problem}. See Section \ref{sec:Preliminaries} for formal definitions and problem setup. 
Unfortunately, while the minimum M-Tree Decomposition problem can be solved efficiently in polynomial time via an elegant greedy approach when the density graph is itself a tree \cite{baryshnikov2018minimal}, we show in Section \ref{sec:Complexity} that the problem for graphs in general is NP-Complete. In fact, no polynomial time constant factor approximation algorithm exists for this problem under reasonable assumptions (see Section \ref{sec:Complexity}). 
Additionally, we show NP-Completeness for several variations of the problem (Section \ref{sec:Complexity}). 
We therefore focus on developing approximation algorithms for this problem. In Section \ref{sec:Algorithms}, we first provide two natural approximation algorithms but with additive error. For the case of multiplicative error, we provide a polynomial time 3-approximation algorithm for computing the so called minimum SM-Tree Set of a density cactus graph.

\section{Preliminaries}
\label{sec:Preliminaries}

\subsection{Problem Definition}
We will now introduce definitions and notions in order to formally define what we wish to compute.
Given a graph $G(V, E)$, a \textit{density function} defined on $G$ is a function $f:V \rightarrow \mathbb{R}^{\geq 0}$.  
A \textit{density graph} $(G, f)$ is a graph $G$ paired with a density function $f$ defined on its vertices.
A \textit{monotone tree} is a density tree with a \textit{root} $v \in V$ such that the path from the root to every node $u \in V$ is non-increasing in density values.  See Figure \ref{fig:monotone_tree_examples} for explicit examples of density trees and monotone trees. While multiple nodes may have the global maximum value on the monotone tree, exactly one node is the root.  For example, in Figure \ref{fig:monotone_tree_examples} (B), either node with the global maximum value may be its root, but only one of them is the root.

\begin{figure*}
\begin{center}
\includegraphics[width = 0.8\linewidth]{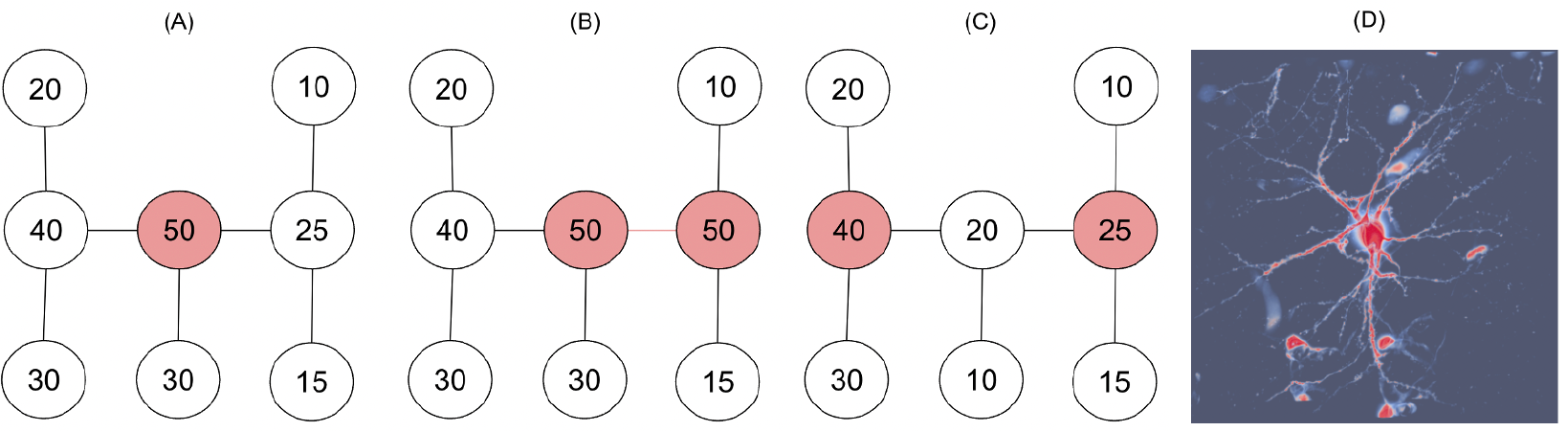}
\end{center}
\vspace*{-0.1in}
\caption{{\small (A) - (C) Contain examples of density trees with relative maxima colored red.  (A) shows a monotone tree. (B) shows a monotone tree with multiple nodes having the global maximum density value. (C) shows an example of a density tree that is not a monotone tree. (D) A zoom in an individual neuron within a full mouse brain imaging dataset.  The dataset is an fMOST imaging dataset that was created as part of the Brain Initiative Cell Census Network and is publicly available for download.
}}
\label{fig:monotone_tree_examples}
\end{figure*}

Given a density graph $(G(V, E), f)$, we wish to build a set of monotone subtrees $(T_{1}, f_{1}), (T_{2}, f_{2}), \dots, (T_{n}, f_{n})$ such that $T_i \subseteq G$ for all $i$ and $\sum_{i = 1}^{n}f_i(v) = f(v)$ for all $v \in V$. 
Note that if a node $v \in V$ is not in a tree $T_{i}$ then we say that $f_i(v) = 0$ and vice versa. 
We will refer to such a decomposition as a \emph{monotone tree (M-tree) decomposition} of the density graph, and refer to the set as an \textit{M-Tree Set} throughout the remainder of the paper.
An M-Tree Set is a \textit{minimum M-Tree Set} for a density graph if there does not exist an M-Tree Set of the density graph with smaller cardinality. 
An example of a density graph and a minimum M-Tree Set is shown in Figure \ref{fig:m-tree-set-example}. 
Note that a density graph may have many different minimum M-Tree Sets.
We abbreviate the cardinality of a minimum M-Tree Set for a density graph $(G, f)$ as $| \minMSet((G, f)) |$.

There are different types of M-Tree Sets that may be relevant for different applications. A \textit{complete M-Tree (CM-Tree) Set} is an M-Tree Set with the additional restriction that every edge in the density graph $G$ must be in at least one tree in the set.  
A \textit{strong M-Tree (SM-Tree) Set} is an M-Tree Set such that the intersection between any two trees in the set must be either empty or contractible.  We similarly abbreviate the cardinality of a minimum SM-Tree Set of $(G, f)$ as $| \minSMSet((G, f)) | $.
A \textit{full M-Tree (FM-Tree) Set} is an M-Tree Set such that for each element $(T_i(V_i,E_i), f_i)$, $f_{i}(v) = f(v)$ for the root node $v \in V_i$ of $(T_i, f_i)$.
The (minimum) M-Tree Set in Figure \ref{fig:m-tree-set-example} is also a (minimum) CM-Tree Set but is neither a SM-Tree Set nor a FM-Tree Set.

\begin{figure*}
\begin{center}
\includegraphics[width = 0.8\linewidth]{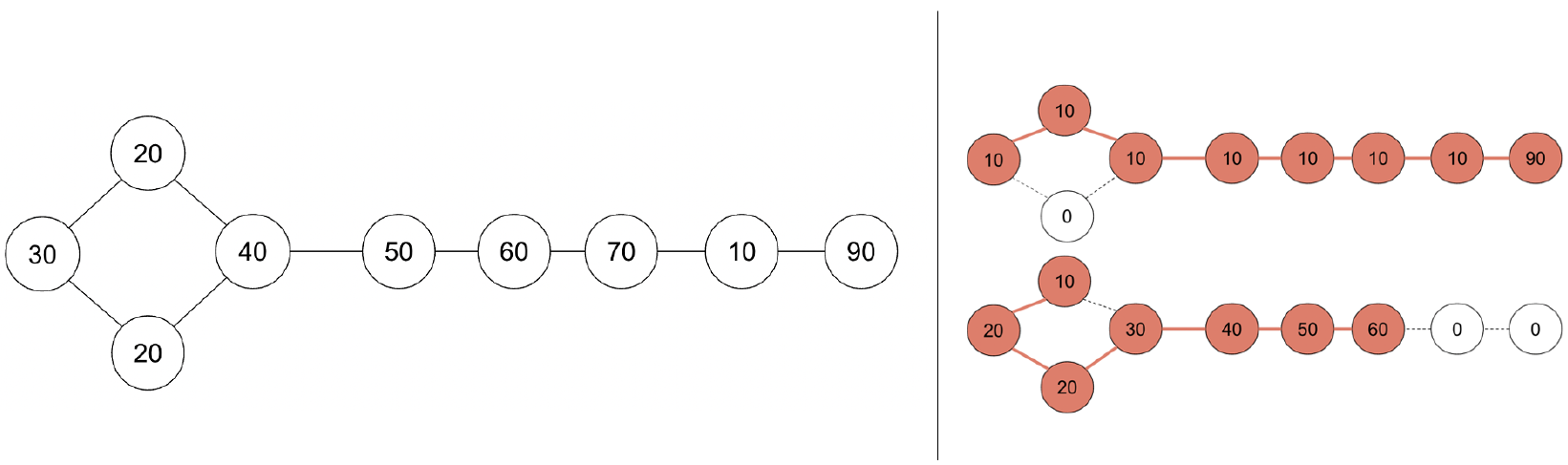}
\end{center}
\vspace*{-0.1in}
\caption{{\small A density graph (left) together with a minimum M-Tree Set (right).  Note that a minimum M-Tree Set is not necessarily unique for a density graph.
}}
\label{fig:m-tree-set-example}
\end{figure*}

\subsection{Greedy Algorithm for Density Trees \cite{baryshnikov2018minimal}}

We will now briefly describe the algorithm for computing minimum M-Tree Sets for density trees developed in \cite{baryshnikov2018minimal}, as some of the ideas will be useful in our work.  Please refer to \cite{baryshnikov2018minimal} for more details.
The approach of \cite{baryshnikov2018minimal} relies on a so-called \emph{monotone sweeping operation} to build individual elements of a minimum M-Tree Sets of density trees. Algorithm \ref{alg:monotonesweep} explicitly defines a generalized version of this operation that we will need in a later proof.

\begin{algorithm2e}
\KwIn{A density tree $(T(V, E), f)$, a starting node $v \in V$, and a staring value $\alpha$ such that $0 < \alpha \leq f(v)$} 
\KwOut{ A monotone subtree $(T', h_{f, v, \alpha})$ and a remainder $(T, R_{v, \alpha}f)$}


    (Step 1) Initialize output density subtree $T'$ to only contain the input vertex $v$, with corresponding density function 
    $h_{f, v, \alpha}(v) = \alpha$ 
    
    (Step 2) Perform DFS starting from $v$.  For each edge $(u \rightarrow w)$ traversed:
    $h_{f, v, \alpha}(w) = \begin{cases} 
      h_{f, v, \alpha}(u) & f(w) \geq f(u) \\
      max(0, h_{f, v, \alpha}(u) - (f(u) - f(w))) & otherwise \\
   \end{cases}$\\
   
    Return monotone tree $(T', h_{f, v, \alpha})$ and remainder density tree $(T, R_{v, \alpha}f)$.
    
 \caption{\monotonesweep{}($(T(V, E), f), v \in V, \alpha$) }
 \label{alg:monotonesweep}
\end{algorithm2e}

The operation takes a density tree $(T(V, E), f)$, a node $v \in V$, and a starting function value $\alpha$ such that $0 < \alpha \leq f(v)$ as input. A monotone subtree $(T', h_{f, v, \alpha})$ and the remainder density tree $(T, R_{v, \alpha}f)$ where $R_{v, \alpha}f(u) = f(u) - h_{f, v, \alpha}(u)$ for all $u \in V$ is returned.

Algorithm \ref{alg:treealgo}, which outputs a minimum M-Tree Set of density trees, performs the monotone sweeping operation iteratively from certain nodes, called the \textit{mode-forced} nodes of the density tree.
To compute these mode-forced nodes, one iteratively remove leaves from the tree if their parent has greater or equal density.   Such leaves are referred to as \textit{insignificant vertices}. 
Once it is no longer possible to remove any additional nodes, the leaves of the remaining graph are the mode-forced nodes of the original density graph.

\begin{algorithm2e}
\KwIn{A density tree $(T(V, E), f)$} 
\KwOut{ A minimum M-Tree set of $(T(V, E), f)$}

    (Step 1) Find a mode forced vertex $v \in V$
    
    (Step 2) Perform \monotonesweep{}($(T(V, E), f), v, f(v)$) to build a single element of a minimum M-Tree Set.
    
    (Step 3) Repeat Steps 1 and 2 on remainder $(T, R_{v, f(v)}f)$ until no density remains.
    
 \caption{\treealgo{}($(T(V, E), f)$) }
 \label{alg:treealgo}
\end{algorithm2e}

An example of a single iteration of the tree algorithm is shown in Figure \ref{fig:tree-algo-example}.
The running time complexity of Algorithm \ref{alg:treealgo} is $O(n * | \mathrm{minMset}((T, f))|)$ where $n$ is the number of nodes in $T$.  We note that all M-Tree Sets of a density tree are also SM-Tree Sets, so Algorithm \ref{alg:treealgo} also outputs a minimum SM-Tree Set.

\begin{figure*}
\begin{center}
\includegraphics[width = 0.8\linewidth]{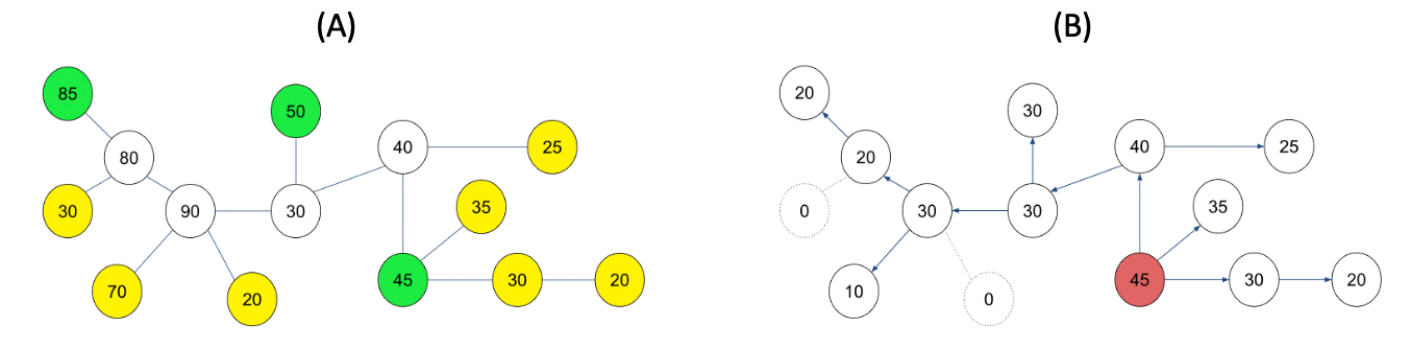}
\end{center}
\vspace*{-0.1in}
\caption{{\small (A) A density graph with mode-forced nodes colored green and insignificant vertices colored yellow.  (B)  A single element built by the monotone sweep operation from a mode forced node as performed in Algorithm \ref{alg:treealgo}. }}
\label{fig:tree-algo-example}
\end{figure*}

\subsection{Additional Property of Monotone Sweeping Operation}
\label{sec:PMSO}
Unfortunately, neither Algorithm \ref{alg:monotonesweep} nor Algorithm \ref{alg:treealgo} can be directly used to compute minimum M-Tree Sets of density graphs with cycles. 
Nevertheless, we can show Claim \ref{claim:sweep-as-much-as-possible} which will later be of use in developing approximation algorithms in Section \ref{sec:Algorithms}.

\begin{claim}\label{claim:sweep-as-much-as-possible}
Given a density tree $(T(V, E), f)$, let $v \in V$.  Let $a, b \in \mathbb{R}^{+}$ such that, without loss of generality,  $0< a < b \leq f(v)$.
Let $(T, R_{v, a}f)$ be the remainder of \monotonesweep{}($(T, f), v, a$).  We can define a similar remainder $(T, R_{v,b}f)$. Then we have 
$| \minMSet((T, R_{v,b}f)) | \leq | \minMSet((T, R_{v,a}f)) |$.
\end{claim}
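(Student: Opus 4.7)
The plan is to reduce the claim to a statement about a single application of \monotonesweep{}, and then to establish that statement.

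I would first verify that the two successive sweeps \monotonesweep{}$((T,f), v, a)$ followed by \monotonesweep{}$((T, R_{v,a}f), v, b-a)$ produce the same decomposition as the single sweep \monotonesweep{}$((T,f), v, b)$. This reduces to a DFS-by-DFS comparison: at each vertex, the recurrence defining $h$ under the second sweep yields exactly $\delta(u) := h_{f,v,b}(u) - h_{f,v,a}(u)$ as the additional swept density, leaving $R_{v,b}f$ as the remainder. Along the way one sees that $\delta$ is itself a monotone tree rooted at $v$: it equals $b - a$ on the support of $h_{f,v,a}$ (where both sweeps are active and differ by the same constant $b-a$), and tapers to zero along the fringe of $h_{f,v,b}$ where it coincides with $h_{f,v,b}$. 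Hence it suffices to prove the general statement that for any density tree $(T,g)$, any vertex $v$, and any $\alpha \in (0, g(v)]$, $|\minMSet((T, R_{v,\alpha}g))| \leq |\minMSet((T,g))|$; the claim then follows by applying this with $g := R_{v,a}f$ and $\alpha := b-a$.

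To prove this general statement, I would start from a minimum M-Tree Set $\mathcal{M}_g = \{(T_1, f_1), \ldots, (T_k, f_k)\}$ of $(T,g)$ and build an M-Tree Set of $(T, R_{v,\alpha}g)$ of cardinality at most $k$. A useful preliminary lemma is that \monotonesweep{} preserves monotonicity of individual monotone trees: if $(T_i, f_i)$ is monotone with root $r_i$ and one performs \monotonesweep{}$((T_i, f_i), v', \alpha')$ for any $v' \in T_i$ with $\alpha' \leq f_i(v')$, then the remainder is again monotone with root $r_i$. This can be verified edge-by-edge: for each edge $(x, y)$ with $x$ closer to $r_i$ (so $f_i(x) \geq f_i(y)$), one compares the direction the DFS from $v'$ traverses it against the direction toward $r_i$, and applies the recurrence for $h$ to conclude the remainder $R$ satisfies $R(x) \geq R(y)$.

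The main step, and the principal technical obstacle, is to choose values $\alpha_1, \ldots, \alpha_k$ (with $\alpha_i \leq f_i(v)$, and $\alpha_i = 0$ when $v \notin T_i$) such that $\sum_i h_{f_i, v, \alpha_i}(u) = h_{g, v, \alpha}(u)$ at every vertex $u$. If this allocation is feasible, the remainders $\{(T_i, f_i - h_{f_i, v, \alpha_i})\}$ form an M-Tree Set of $(T, R_{v, \alpha}g)$ of cardinality $k$ (using the preliminary lemma to preserve monotonicity and the pointwise sum identity to verify the total equals $R_{v,\alpha}g$). The difficulty is that individual sweeps propagate according to descents within each $f_i$, whereas $h_{g,v,\alpha}$ follows descents of $g = \sum_i f_i$, so a pointwise match is not immediate. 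My plan is to build the allocation inductively along the DFS from $v$, at each visited vertex distributing $h_{g,v,\alpha}(u)$ among the trees covering $u$ in amounts consistent with their individual sweep recurrences; if at some vertex the allocation would exceed what a particular $f_i$ provides, the tree $T_i$ is entirely absorbed and can be dropped, which only strengthens the bound. In the worst case, one may also adjust $\mathcal{M}_g$ locally by redistributing overlapping density in a size-preserving manner until the allocation becomes feasible.
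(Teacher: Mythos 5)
Your reductions are sound as far as they go: the composition identity $h_{f,v,b}=h_{f,v,a}+h_{R_{v,a}f,v,b-a}$ does hold (it can be checked edge-by-edge through the four cases of the two $\max$ operations), and so does your preliminary lemma that \monotonesweep{} applied to a single monotone tree leaves a monotone remainder with the same root. But the step you yourself flag as the ``principal technical obstacle'' is a genuine gap, and moreover the specific identity you propose to engineer is generally unachievable. Across an edge $(u,w)$ directed away from $v$, the global sweep loses $\max(0,g(u)-g(w))$ while the sum of the individual sweeps loses at least $\sum_i \max(0,f_i(u)-f_i(w)) \geq \max(0,\sum_i(f_i(u)-f_i(w)))$, with strict inequality whenever one tree's density increases along the edge while another's decreases. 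Hence $\sum_i h_{f_i,v,\alpha_i} \leq h_{g,v,\alpha}$ pointwise with possible strict deficit no matter how the $\alpha_i$ are chosen, so the exact pointwise allocation cannot exist in general, and the fallback of ``redistributing overlapping density in a size-preserving manner'' is precisely the combinatorial content of the claim, not a step that can be waved through. As written, the proof does not close.

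For contrast, the paper avoids this allocation problem entirely: it argues by contradiction, attaching to $v$ a two-vertex gadget (an intermediate node of value $a$, respectively $b$, leading to a node of value $\infty$) so that the first iteration of the known-optimal greedy tree algorithm of \cite{baryshnikov2018minimal} is forced to perform exactly the sweep \monotonesweep{}$((T,f),v,a)$ (resp.\ $b$), giving $|\minMSet((T_a,f_a))| = |\minMSet((T,R_{v,a}f))|+1$ and likewise for $b$. Exhibiting an explicit M-Tree Set of $(T_b,f_b)$ of size $|\minMSet((T,R_{v,a}f))|+1$ then contradicts the assumed strict inequality. All of the difficulty you encounter is thereby offloaded onto the already-established optimality of the greedy algorithm. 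If you want to salvage your direct approach, you would need to replace the exact allocation by an argument that produces \emph{some} M-Tree Set of the remainder of size at most $k$, not one that reproduces $h_{g,v,\alpha}$ tree by tree; alternatively, adopt the gadget argument.
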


\begin{proof}

We will prove the claim by contradiction.  Assume that $| \minMSet((T, R_{v,b}f)) | > | \minMSet((T, R_{v,a}f)) |$.

In particular, we will construct two new density trees, $(T_a, f_a)$ and $(T_b, f_b)$, as follows: 
$T_a$ is equal to our starting tree with the addition of two nodes $v_a$ and $v_{\infty}$, with two additional edges connecting to $v_a$ to both $v_{\infty}$ and $v$. Set $f_{a}(v_a) = a$ and  $f_{a}(v_{\infty}) = \infty$.  Similarly define $T_b$ and $f_b$.

Now imagine we run Algorithm \ref{alg:treealgo} on $(T_a, f_a)$. 
$v_{\infty}$ is a mode-forced node, and thus we can perform the first iteraiton in Algorithm \ref{alg:treealgo} on $v_{\infty}$. Sweeping from $v_{\infty}$ will leave remainder with a minimum M-Tree Set of size $| \minMSet((T_a, f_a)) |- 1$. The remainder is exactly the same as $(T, R_{v,a}f)$ at all nodes $v \in V$, and is zero at our newly added nodes. 
Hence, $| \minMSet((T_a, f_a)) | = | \minMSet((T, R_{v,a}f)) |  + 1$. 
Similarly, by performing Algorithm \ref{alg:treealgo} on $(T_b, f_b)$, we have $| \minMSet((T_b, f_b)) | = | \minMSet(T, R_{v,b}f)| + 1$

Now if our initial assumption is true, namely $|minMset((T, R_{v,b}f))| > |minMset((T, R_{v,a}f))|$, then by the above argument we have that 
\begin{align}\label{eqn:contradiction}
|\mathrm{minMSet}((T_b, f_b))| &> |\mathrm{minMSet}((T_a, f_a))|. 
\end{align} 
However, we could construct an M-tree set of $(T_b, f_b)$ as follows: First construct one monotone tree rooted at $v_\infty$ that leaves no remainder at both $v_\infty$ and $v_b$, then perform the monotone sweep operation starting at $v$ with starting value $a$ to build the rest of the component.
Note that the remainder after removing this tree is in fact $(T_a, R_{v,a}f)$, which we can then decompose using the minimum M-tree set of $(T_a, R_{v,a}f)$. 
In other words, we can find a M-tree set for $(T_b, f_b)$ with $| \minMSet(T, R_{v,a}f)| + 1 = |\minMSet(T_a, f_a)|$. 
This however contradicts with Eqn (\ref{eqn:contradiction}) (and the correctness of Algorithm \ref{alg:treealgo}). 
Hence our assumption cannot hold, and we must have that $| \minMSet(T, R_{v,b}f)| \leq |\minMSet(T, R_{v,a}f)|$. This proves the claim.

We note that while this proof is for M-Tree Sets specifically, the proof for SM-Tree Sets follows identical arguments.
\end{proof}

\section{Hardness Results}
\label{sec:Complexity}

Given that there exists a polynomial time algorithm for computing minimum M-Tree Sets of density trees, it is natural to ask whether or not such an algorithm exists for density graphs.  We prove Theorem \ref{thm:npcomp}, stating that the problem is NP-Complete. 

\begin{theorem}\label{thm:npcomp}
Given a density graph $(G(V, E), f)$ and a parameter $k$, determining whether or not there exists an M-Tree set of size $\leq k$ is NP-Complete.
\end{theorem}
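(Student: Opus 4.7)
The plan is to prove NP-completeness in two parts. First, for NP membership: an M-Tree Set $\{(T_i, f_i)\}_{i=1}^\ell$ with $\ell \leq k$ serves as a polynomial-size certificate, verifiable in polynomial time by checking that each $T_i$ is an acyclic connected subgraph of $G$, that each $f_i$ is non-increasing along every path from its designated root (via BFS in $O(|V_i|)$ time), and that $\sum_i f_i(v) = f(v)$ for every $v \in V$.

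For NP-hardness, I would give a polynomial-time reduction from Set Cover; the choice is motivated by the paper's remark that no polynomial-time constant-factor approximation exists, which aligns naturally with the inapproximability profile of Set Cover. Given an instance with universe $U$ and family $\mathcal{S}$, I would build $(G,f)$ with a vertex $v_S$ per set, a vertex $v_e$ per element, and an edge $\{v_S, v_e\}$ whenever $e \in S$, with density values engineered so that the intended correspondence holds: each chosen set $S$ contributes one monotone tree rooted at $v_S$ extending to the element-vertices of $S$, and every element-vertex must be reached by at least one such tree. An auxiliary gadget (e.g., a pendant peak attached to each $v_S$ forcing it to be mode-forced, or a small cycle consuming exactly one tree per set-vertex) can be used so that the cost of a valid decomposition is exactly $|\text{cover}| + c$ for a fixed constant $c$, yielding the equivalence between an M-Tree Set of size $\leq k$ and a set cover of size $\leq k - c$.

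The hardest step will be ruling out unintended decompositions in the NP-hardness direction. Because elements shared among sets create short cycles of the form $v_S - v_e - v_{S'} - v_{e'} - v_S$, a single monotone tree could in principle sweep through multiple set-vertices and amortize coverage of several sets at once, undermining the count. Preventing this requires careful density assignment -- for instance, choosing set-vertex densities that are pairwise distinct and separated enough that no monotone tree can include two of them while remaining monotone from any root, or attaching forcing peak gadgets adjacent to each $v_S$ so that each set-vertex is compelled to be the root of a single tree whenever any density is placed there. Once such gadgets are proven to enforce the one-tree-per-chosen-set behavior, the identity $|\minMSet((G, f))| = \mathrm{OPT}_{\mathrm{SC}} + c$ holds, completing the reduction; the analogous argument for the SM-Tree / CM-Tree / FM-Tree variants should go through by tuning the gadgets to respect the relevant additional constraints.
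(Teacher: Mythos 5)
Your NP-membership argument is fine, and you have correctly located the crux of the hardness direction: in the bipartite incidence graph of a Set Cover instance, two sets sharing two or more elements create a $4$-cycle $v_S$--$v_e$--$v_{S'}$--$v_{e'}$--$v_S$, and this is exactly what threatens both the well-definedness of the intended decomposition (the ``tree'' built for a chosen set may fail to be a tree) and the counting argument. But you leave this crux unresolved. The paper resolves it not with gadgets but by changing the source problem: it reduces from \emph{Set Cover Intersect 1} (SC-1), the still-NP-complete restriction in which any two sets intersect in at most one element (a generalization of covering points in the plane by lines). Under that restriction the offending $4$-cycles cannot occur inside any single constructed component, the plain incidence graph with $f(a_{S_i}) = |S_i|$ and $f(b_{e_j}) = 1$ works as is, and the correspondence is exact: a cover of size $\leq k$ exists iff an M-Tree Set of size $\leq k$ exists, with no additive constant. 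Your proposed fixes do not obviously work as stated. A pendant high-density peak at every set-vertex forces one tree rooted near every set-vertex regardless of whether that set is chosen (every $v_S$ carries positive density that must be decomposed), which makes the minimum essentially independent of the cover structure and destroys the correspondence. Separating the set-vertex densities does not by itself prevent a monotone tree from containing two set-vertices, since monotonicity only constrains values along root-to-leaf paths; indeed the intended decomposition itself must route the density of \emph{unchosen} set-vertices through trees rooted at chosen ones, so those trees necessarily contain several set-vertices.

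A second, independent gap is the backward direction. Given an \emph{arbitrary} M-Tree Set of size $\leq k$ --- whose trees need not respect your intended one-tree-per-chosen-set structure at all --- you must still extract a cover of size $\leq k$. The paper does this with a dedicated lemma: every element-vertex is either the root of some tree in the set or adjacent to such a root. The proof exploits the density balance $f(a_{S_l}) = \sum_{e \in S_l} f(b_e)$: if a set-vertex receives positive density as a non-root interior node of some tree, then in the remainder it has strictly more density than all of its neighbors combined and is therefore forced to be a root of some other tree. Taking one set per root then yields a cover of size at most the number of trees. Your proposal says nothing about how to handle such adversarial decompositions, and without that step the reduction establishes only one of the two required implications.
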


\begin{proof}

It is easy to see that this problem is in NP, so we will now show it is also in NP-Hard.
First we consider a variation of the Set Cover problem where the intersection between any two sets is at most 1.  We refer to this problem as Set Cover Intersect 1 (SC-1).  SC-1 is a generalization of the NP-Complete problem of covering points in a plane with as few lines as possible \cite{MEGIDDO1982194}, and approximation bounds of SC-1 are well studied in \cite{kumar2000hardness}.
Given an instance of SC-1 ($m$ sets $S_1, S_2, \dots, S_m$ covering a universe of $n$ elements $e_1, e_2, \dots, e_n$, and a number $k$), we reduce to an instance of the M-Tree Set decision problem as follows:
\begin{itemize}
    \item Create a bipartite graph $G(V = A \cup B, E)$ equipped with a density function $f: V \rightarrow \mathbb{R}^{\geq 0}$ based on the input (SC-1) instance. 
    \item In particular, for each set $S_i$, add a node $a_{S_i}$ to $A$ and set $f(a_{S_i}) = | S_i |$.
    \item For each element $e_j$, add a node $b_{e_j}$ to $B$ and set $f(b_{e_j}) = 1$
    \item For each set $S_i$, add edge between $a_{S_i}$ and $b_{e_j}$ for each element $e_j \in S_i$.
\end{itemize}
An example of this reduction is illustrated in Figure \ref{fig:sc-1-reduction}.

\myparagraph{First Direction:
If there is a Set Cover of size $\le k$, then there is an M-Tree Set of density graph $(G, f)$ whose cardinality is $\leq k$.}

Let $S_{cover}$ be a set cover of size $n \leq k$. For each $S_i \in S_{cover}$, we will construct a monotone tree $(T_i, f_i)$ rooted at $a_{S_i}$.  In particular, $f_i(a_{S_i}) = f(a_{S_i})$.  Then, for each element $e_j \in S_i$, $T_i$ will include $b_{e_j}$ and the edge $(a_{S_i}, b_{e_j})$, with $f_i(b_{e_j}) = 1$. Note that if $e_j$ is an element in multiple sets in $S_{cover}$, simply pick one $S_i \in S_{cover}$ such that $e_j \in S_i$ to be the representative set of $e_j$. Finally, for each set $S_l \notin S_{cover}$, for each element $e_j \in S_l$, add the node $a_{S_l}$ and the edge $(b_{e_j}, a_{S_l})$ to $T_i$ with $f_i(a_{S_l}) = 1$, where $(T_i, f_i)$ is the monotone tree rooted at the node $a_{S_i}$ where $S_i \in S_{cover}$ is the representative set containing $e_j$.

Firstly, each element in the M-Tree Set is connected by construction.  The only nodes in an element $(T_i, f_i)$ are the root node $a_{S_i}$, where $S_i \in S_{cover}$, nodes of the form $b_{e_j}$, where $e_j \in S_i$, and nodes of the form $a_{S_l}$, where $S_l \notin S_{cover}$ and there exists $e_j$ in both $S_i$ and $S_l$.  Edges of the form $(a_{S_i}, b_{e_j})$ are part of the domain by construction and are included in $T_i$.  Similarly, edges of the form $(a_{S_l}, b_{e_j})$ are also part of the domain by construction and are included in $T_i$.  For each edge $(a_{S_l}, b_{e_j}) \in T_i$, there must also exist an edge $(a_{S_i}, b_{e_j})$.  Thus all nodes in $T_i$ are connected to $a_{S_i}$ - and in particular at most 2 edges away.

Secondly, each element in the M-Tree Set is a tree.  Consider element $(T_i, f_i)$.  By construction, if a cycle were to exist in $T_i$ it would have to be of the form $a_{S_i}, b_{e_p}, a_{S_l}, b_{e_q}, a_{S_i}$, where both $e_p$ and $e_q$ are in both $S_i$ and $S_l$.  However, such a cycle would imply that two sets have at least two elements in their intersection, which is not possible given we reduced from SC-1.

Next, each element in the M-Tree Set is a monotone tree. $f_i(v) = 1$ for all $v \in T_i$ that are not the root $a_{S_i}$ of $(T_i, f_i)$ and $f_i(a_{S_i}) \geq 1$.

Finally, $f(v) = \sum_{a = 1}^{n}f_i(v)$ for all $v \in G$. Each node $a_{S_i}$ such that $S_i \in S_{cover}$ is part of one monotone tree $(T_i, f_i)$ and $f_i(a_{S_i}) = f(a_{S_i})$.  Each node $b_{e_j} \in B$ is also part of only one monotone tree $(T_i, f_i)$ and $f_i(e_j) = 1 = f(e_j)$.  Finally, for a set $S_l \notin S_{cover}$, $a_{S_l}$ is included in $m = | S_l |$ monotone trees. For each such monotone tree $(T_i, f_i)$, $f_i(a_{S_l}) = 1$, thus $\sum_{a = 1}^{n}f_i(a_{S_l}) = m = f(a_{S_l})$.

Thus, we have proven that there exists a M-Tree Set of $(G, f)$ of size $\leq k$. 

\myparagraph{Second Direction: If there is an M-Tree Set of density graph $(G, f)$ of size $\leq k$, then there is a Set Cover of size $\leq k$.}

Let $\{(T_i, f_i)\}$ be an M-Tree set of density graph $(G, f)$ of size $k$. Each monotone tree $(T_i, f_i)$ in the set has a root node $m_i$.  If multiple vertices in $T_i$  have the maximum value of $f_i$ (as seen in Figure \ref{fig:monotone_tree_examples}(B)) simply set one of them to be $m_i$.  Each edge in $T_i$ has implicit direction oriented away from $m_i$. First we prove Lemma \ref{lemma:max-or-next-to-max}.
\begin{lemma}\label{lemma:max-or-next-to-max}
Let $b_{e_j} \in B$.  Either $b_{e_j}$ is the root of a monotone tree in the M-Tree Set or at least one of its neighbors is the root of a monotone tree in the M-Tree Set.
\end{lemma}

\begin{proof}
Assume $b_{e_j}$ is not a root of any monotone tree. Consider a monotone tree $(T_i, f_i)$ of the M-Tree Set containing $b_{e_j}$.  This means that $f_i(b_{e_j}) >0$. 
Consider node $a_{S_l}$ that is the parent of $b_{e_j}$ in $(T_i, f_i)$.
Assume $a_{S_l}$ is not the root node of $(T_i, f_i)$.
Because $a_{S_l}$ is not the root of the component, it must have a parent $b_{e_d}$.
Consider the remaining density graph $(G, g = f - f_i)$.
By definition of monotone tree, $0 < f_i(b_{e_j}) \leq f_i(a_{S_l}) \leq f_i(b_{e_d})$.
By construction, we also know $f(a_{S_l}) = \sum_{e_j \in S_l}f(b_{e_j})$.
Therefore, $g(a_{S_l}) > \sum_{e_j \in S_l}g(b_{e_j})$.
Because $a_{S_l}$ has more density than the sum of all of its neighbors in $(G, g)$, it is impossible for $a_{S_l}$ to not be the root of at least one monotone tree in any M-Tree Set of $(G, g)$.
Thus if $b_{e_j}$ is not the root of any monotone tree in the M-Tree Set, $a_{S_l}$ must be the root of a monotone tree in the M-Tree Set.
\end{proof}

We now construct a set cover from the M-Tree Set with the help of Lemma \ref{lemma:max-or-next-to-max}.  Initialize $S_{cover}$ to be an empty set.  For each $a_{S_i} \in A$ that is a root of a monotone tree in the M-Tree Set, add $S_i$ to $S_{cover}$.  Next for each $b_{e_j} \in B$ that is the root of a monotone tree in the M-Tree Set, if there is not already a set $S_i \in S_{cover}$ such that $e_j \in S_i$, choose a set $S_l$ such that $e_j \in S_l$ to add to the Set Cover.
Every element must now be covered by $S_{cover}$.  
A node $e_j$ that is not the root in any monotone tree in the M-Tree Set must have a neighbor $a_{S_l}$ that is a root in some monotone tree by Lemma \ref{lemma:max-or-next-to-max}. The corresponding set $S_l$ was added to $S_{cover}$ - thus $e_j$ is covered.
A node $e_m$ such that $b_{e_m}$ is the root of a monotone tree in the M-Tree Set must also be covered by $S_{cover}$ - as a set was added explicitly to cover $e_m$ if it was not already covered.
We've added at most one set to the cover for every monotone tree in the M-Tree Set, therefore $|S_{cover}| \leq k$.

Combining both directions, we prove that, given a SC-1 instance, we can construct a density graph $(G, f)$ such that there exists a set cover of size $\leq k$ if and only if the density graph has a M-tree Set of size $\leq k$. This proves the problem is NP-Hard, and thus the problem is NP-Complete.
\end{proof}

\begin{figure*}
\begin{center}
\includegraphics[width = 0.8\linewidth]{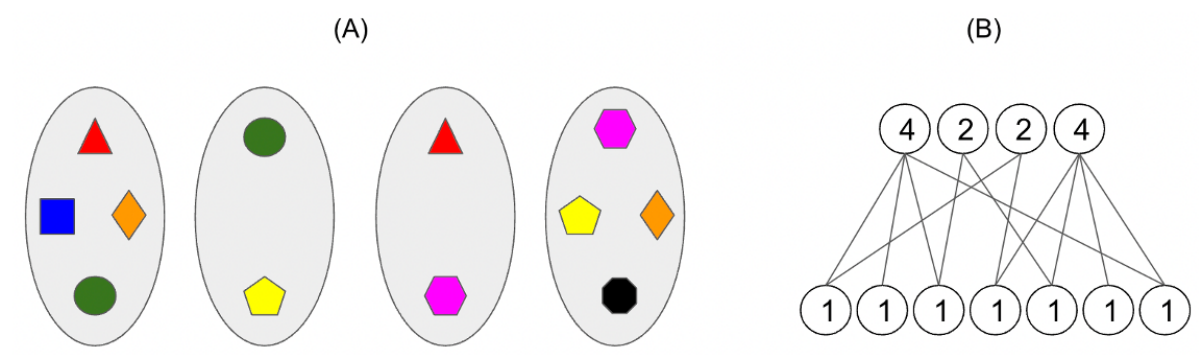}
\end{center}
\vspace*{-0.1in}
\caption{{\small (A) SC-1 instance with 4 sets and seven elements  (B) M-Tree decision problem instance created by following reduction outlined in proof of Theorem \ref{thm:npcomp}. The top row consists of nodes in $A \subset V$ in the bipartite graph, which are nodes representing sets, while the bottom row consists of nodes in $B \subset V$ in the bipartite graph, which are nodes representing elements.
}}
\label{fig:sc-1-reduction}
\end{figure*}

\subsection{Approximation Hardness}
From the proof of Theorem \ref{thm:npcomp}, it is easy to see that given an instance of SC-1, the size of its optimal set cover is equivalent to the cardinality of the minimum M-Tree Set of the density graph constructed in the reduction. 
Hence the hardness of approximation results for SC-1 translate to the minimum M-Tree Set problem too. 
We therefore obtain the following result, stated in Corollary \ref{cor:approx-bound} which easily follows from a similar result for SC-1. The SC-1 result from \cite{kumar2000hardness} is stated in Appendix \ref{app:sc1-bound}. We note that while Corollary \ref{cor:approx-bound} states the bound in terms of $n = $ of number of relative maxima, a similar bound can be obtained where $n = $ number of vertices.

\begin{corollary}\label{cor:approx-bound} 
There exists a constant $c > 0$ such that approximating the minimum M-Tree Decomposition problem within a factor of $c\frac{log(n)}{log(log(n))}$, where $n$ is the number of relative maxima on the given density graph, in deterministic polynomial time is possible only if $NP \subset DTIME(2^{n^{1 - \epsilon}})$ where $\epsilon$ is any positive constant less than $\frac{1}{2}$. 
\end{corollary}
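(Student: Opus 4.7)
The plan is to observe that the reduction constructed in the proof of Theorem~\ref{thm:npcomp} is already approximation-preserving, and then to transfer the SC-1 hardness bound of \cite{kumar2000hardness} (recalled in Appendix~\ref{app:sc1-bound}) essentially verbatim, without building any new gadget.

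First I would sharpen the conclusion of Theorem~\ref{thm:npcomp}'s proof: its two directions actually establish more than a decision-level reduction. The forward direction turns any set cover of size $k$ into an M-Tree Set of size $k$, and the backward direction turns any M-Tree Set of size $k$ into a set cover of size at most $k$. Composed, these give the exact equality between the SC-1 optimum and $|\minMSet((G,f))|$ for the constructed density graph. Consequently, any polynomial-time $\alpha$-approximation algorithm for the minimum M-Tree Decomposition problem yields an $\alpha$-approximation algorithm for SC-1: run it on $(G,f)$ and then extract a cover of the same cardinality via the backward-direction procedure (which itself is polynomial).

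Second I would plug in the hardness result from \cite{kumar2000hardness}: for some constant $c'>0$, approximating SC-1 within a factor of $c'\log(N)/\log\log(N)$ in deterministic polynomial time is impossible unless $NP\subset DTIME(2^{N^{1-\epsilon}})$, where $N$ is the universe size. The final step is matching parameters. Each universe element $e_j$ becomes a vertex $b_{e_j}\in B$, and under the mild WLOG assumption that every set has size at least two, the relative maxima of $(G,f)$ are exactly the vertices $a_{S_i}\in A$; hence the number of relative maxima equals the number of sets $m$. In the standard SC-1 hardness instances $m$ and $N$ are polynomially related, and since $\log(\mathrm{poly}(n))/\log\log(\mathrm{poly}(n)) = \Theta(\log n/\log\log n)$, the bound transfers to $c\log(n)/\log\log(n)$ with a possibly smaller constant $c$. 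The parenthetical vertex-count variant in the statement then follows immediately since $|V|=m+N$ is polynomial in either parameter.

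The only subtle point, and the main place requiring care, is precisely this identification of $n$: the universe size, the number of sets, the number of relative maxima, and the total number of vertices are pairwise only polynomially related, so one must justify absorbing that slack into the leading constant $c$ rather than into the shape of the bound. Beyond this bookkeeping, no new ideas beyond the reduction of Theorem~\ref{thm:npcomp} and the cited SC-1 hardness result are required.
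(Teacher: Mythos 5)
Your proposal is correct and follows essentially the same route as the paper: observe that the reduction of Theorem~\ref{thm:npcomp} preserves the optimum exactly, invoke the SC-1 hardness bound of \cite{kumar2000hardness}, and absorb the polynomial relationship between the universe size and the number of relative maxima into the constant $c$ (the paper does this bookkeeping slightly differently, bounding the number of sets by $\binom{n}{2}+n$ directly from the pairwise-intersection-at-most-one property rather than from the structure of the hardness instances, but the effect is identical).
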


\begin{proof}

Under the same assumptions mentioned above, there exists a $c > 0$ such that SC-1 cannot be approximated within a factor of $c\frac{log(n)}{log(log(n))}$, where $n$ is the number of elements in the universe \cite{kumar2000hardness}. 
We note that for a given SC-1 instance, performing the reduction to the M-Tree Set decision problem seen in the proof of Theorem \ref{thm:npcomp} results in a density graph with at most $\frac{n(n-1)}{2} + n$ relative maxima - the upper bound on the number of sets in the SC-1 instance.
Thus, the number of relative maxima on the density graph is $O(n^2)$.

For sufficiently large $n$, we have the following:

$c \frac{log(n^2)}{log(log(n^2))} = 2c \frac{log(n)}{log(2log(n))} = 2c \frac{log(n)}{log(log(n)) + 1} < 2c \frac{log(n)}{log(log(n))}$

Thus there exists a $c > 0$ such that minimum M-Tree Decomposition problem cannot be approximated within a factor of $c\frac{log(n^2)}{log(log(n^2))}$ under the same assumptions mentioned previously.  Because the number of relative maxima on the density graph is $O(n^2)$, we can substitute the number of relative maxima for $n^2$ to establish our final bound.

\end{proof}

\subsection{Variations of minimum M-Tree Sets are also NP-Complete}
In addition to proving that computing minimum M-Tree Sets of density graphs is NP-Complete, we have also proven Theorem \ref{thm:var-npcomp} in Appendix \ref{app:Complexity}.  The theorem states that computing the minimum CM-Tree Sets, minimum SM-Tree Sets, and minimum FM-Tree Sets of density graphs is also NP-Complete.

\begin{theorem}\label{thm:var-npcomp}
Given a density graph $(G(V, E), f)$ and a parameter $k$, determining whether or not there exists a CM-Tree Set, SM-Tree Set, or FM-Tree Set of size $\leq k$ are all NP-Complete.
\end{theorem}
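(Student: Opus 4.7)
The plan is to prove NP-Completeness for each of the three variants via reductions from SC-1, building on the construction and argument of Theorem \ref{thm:npcomp}. Membership in NP is routine in each case, since the additional constraint (every edge covered, pairwise contractible intersections, or full root density) can be verified in polynomial time given a candidate tree set.

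For the CM-Tree and FM-Tree variants, I would use the exact construction from Theorem \ref{thm:npcomp} without modification. In the forward direction I would verify that the tree set built from a size-$k$ cover already satisfies the added constraint: every edge $(a_{S_i}, b_{e_j})$ of the bipartite graph is already included in the tree rooted at the representative set of $e_j$, giving CM; and each constructed root $a_{S_i}$ satisfies $f_i(a_{S_i}) = |S_i| = f(a_{S_i})$, giving FM. In the reverse direction, since CM-Tree Sets and FM-Tree Sets are by definition restricted M-Tree Sets, applying Lemma \ref{lemma:max-or-next-to-max} and the extraction argument of Theorem \ref{thm:npcomp} to the given variant-tree set yields a set cover of size at most $k$.

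For the SM-Tree variant, a direct reuse of the Theorem \ref{thm:npcomp} construction is problematic: two trees rooted at distinct set nodes $a_{S_i}, a_{S_j}$ can share multiple isolated $a_{S_l}$ vertices (one per $S_l \notin S_{cover}$ containing elements whose representatives are both $S_i$ and $S_j$), and such a disjoint union of points is not contractible. I would address this by modifying the vertex $a_{S_l}$ into a small gadget, for example a star or chain of weight-$1$ vertices, one per element of $S_l$, joined together so that the set identity is preserved but shared usage by two different covering trees traces out a connected (hence contractible) subgraph. The forward direction would then produce an SM-Tree Set whose pairwise intersections are empty or a single contractible piece, and the reverse direction would extract a set cover via an analog of Lemma \ref{lemma:max-or-next-to-max} adapted to the gadget.

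The main obstacle is designing the SM gadget to simultaneously achieve two things: in the forward direction, the tree-set cardinality must still equal the set cover size (so the gadget cannot introduce forced new mode vertices); and in the reverse direction, an SM-Tree Set of size $\leq k$ must still encode a set cover of size $\leq k$, which requires that the root-to-set correspondence driving Lemma \ref{lemma:max-or-next-to-max} carries over to the gadgetized graph. Balancing these constraints while preserving the SC-1 combinatorics is the crux of the argument; once the gadget is in place, the two-direction template of Theorem \ref{thm:npcomp} should transfer with only minor bookkeeping changes.
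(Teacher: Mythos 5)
There are two genuine gaps. First, your CM-Tree argument asserts that the tree set built in the forward direction of Theorem \ref{thm:npcomp} already covers every edge, but this is false: if an element $e_j$ lies in two sets $S_i, S_j \in S_{cover}$ with $S_i$ chosen as its representative, the construction places $b_{e_j}$ (with value $1$) only in the tree rooted at $a_{S_i}$, so the edge $(a_{S_j}, b_{e_j})$ appears in no tree. The paper repairs exactly this by splitting $f(b_{e_j})$ into fractions $\tfrac{1}{n}$ distributed over the $n$ covering sets containing $e_j$ (and correspondingly re-routing the contributions to the non-cover vertices $a_{S_l}$), which preserves cardinality and monotonicity while making every edge used. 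Without some such modification your forward direction for CM does not go through. Your FM case, by contrast, is fine and matches the paper.

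Second, for SM-Tree Sets you correctly identify the obstruction (two trees can share several isolated $a_{S_l}$ vertices, a non-contractible intersection), but you leave the fix as an unspecified gadget and acknowledge it is the crux. That is the part of the proof that actually has to be done, and it is not clear your gadget can work: the shared vertices belong to gadgets for \emph{different} non-cover sets $S_l, S_{l'}$, so making the pairwise intersection connected would require joining gadgets across distinct sets, which risks creating new cheap covering trees and destroying the SC-1 correspondence. The paper sidesteps this entirely by changing the source problem: it reduces from a restricted Vertex Cover (instances where any two vertices have at most one common neighbor), first proving that restriction NP-complete via restricted 3-SAT, and then applies the same bipartite incidence construction; the common-neighbor bound guarantees any two trees meet in at most one vertex, hence contractibly. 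You would need either to carry out that alternative reduction or to actually exhibit and verify a gadget; as written, the SM case is incomplete.
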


It should be noted that Corollary \ref{cor:approx-bound} can be extended to CM-Tree Sets and FM-Tree Sets.  In contrast, SC-1 is not used in the NP-Complete proof for SM-Tree Sets.  Thus, Corollary \ref{cor:approx-bound} does not apply to minimum SM-Tree Set and there is hope we can develop tighter bounded approximation algorithms for this problem than for the other variations.

\section{Algorithms}
\label{sec:Algorithms}

\subsection{Additive Error Approximation Algorithms}
Now that we have shown that computing minimum M-Tree Sets of density graphs, as well as several additional variations, is NP-Complete, we focus on developing approximation algorithms.
We define two algorithms with different additive error terms. 
Firstly, we note that a naive upper bound for a given density graph is the number of relative maxima on the graph. We include Algorithm \ref{alg:naive} in Appendix \ref{app:Algorithms} to establish this naive upper bound.

Shifting focus to nontrivial approaches, Algorithm \ref{alg:approxalgo_2g} computes the minimum M-Tree Set of a density graph restricted to a spanning tree $T \subseteq G$.  We prove that $| \minMSet((T, f)) | \leq | \minMSet((G, f)) | + 2g$, where $g$ the \emph{genus} of $G$.  For a connected graph, $G(V, E)$, its genus is equal to $|E| - |V| + 1$, which is the number of independent cycles on the graph. This approximation error bound for Algorithm \ref{alg:approxalgo_2g} is stated in Theorem \ref{thm:opt_2g}.

\begin{algorithm2e}
\KwIn{A density graph $(G(V, E), f)$ such that $\beta_{1}G = g$} 
\KwOut{ An (S)M-Tree set of $G, f$ }

    (Step 1) Compute $g$ edges that if removed leave a spanning tree $T$ of $G$
    
    (Step 2) Compute minimum (S)M-Tree set of density tree $(T, f)$ via Algorithm \ref{alg:treealgo}
    
 \caption{\algotwog{}((G(V, E), f)) }
 \label{alg:approxalgo_2g}
\end{algorithm2e}

\begin{theorem}\label{thm:opt_2g}
Let $(G(V, E), f)$ be a density graph with $\beta_{1}G = g$. Let $k^*$ be the size of a minimum (S)M-Tree Set of $(G, f)$.  Algorithm \ref{alg:approxalgo_2g} outputs an (S)M-Tree Set of size at most $k^* + 2g$.
\end{theorem}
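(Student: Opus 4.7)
My plan is to take a minimum (S)M-tree set $\mathcal{M}^* = \{(T_i, f_i)\}_{i=1}^{k^*}$ of $(G, f)$ and construct from it an (S)M-tree set of $(T, f)$ of size at most $k^* + 2g$; since Algorithm \ref{alg:treealgo} outputs a minimum (S)M-tree set of the density tree $(T, f)$ (as noted following Algorithm \ref{alg:treealgo}), this immediately bounds the cardinality of the algorithm's output.

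For each $T_i \in \mathcal{M}^*$, let $r_i$ denote the number of its edges lying in $E' := E(G) \setminus E(T)$, the $g$ absent edges. I would restrict $T_i$ to $T$ by deleting these $r_i$ edges, splitting $T_i$ into exactly $r_i + 1$ connected subtrees. The first routine check is that each such subtree, when rooted at its vertex closest to the original root $m_i$ of $T_i$, is itself a monotone tree: any root-to-leaf path in the subtree is a suffix of a path from $m_i$ in $T_i$, so densities are non-increasing along it. A similarly routine check verifies that the restricted collection covers $f$ at every vertex and, for the SM-tree variant, preserves the contractible-intersection property, since deleting edges from members of $\mathcal{M}^*$ cannot introduce new pairwise intersections. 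This produces an (S)M-tree set of $(T, f)$ of size $k^* + \sum_i r_i$.

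The crux of the proof, and the main obstacle, is bounding $\sum_i r_i \leq 2g$. I would argue that the minimum (S)M-tree set $\mathcal{M}^*$ can be chosen so that every absent edge $e = (u,v) \in E'$ is used by at most two trees: at most one in which $u$ is an ancestor of $v$ and at most one with the reverse orientation, from which $\sum_i r_i \leq 2g$ follows immediately. My plan for establishing this is a local exchange argument: if three or more trees use the same $e$ in the same parent-child orientation, then the densities contributed by these trees can be redistributed along the fundamental cycle of $e$ (that is, the edge $e$ together with the unique $u$-to-$v$ path in $T$) to yield another minimum (S)M-tree set with strictly fewer trees using $e$ in that orientation. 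Iterating this normalization across all absent edges produces a minimum $\mathcal{M}^*$ with the desired property. For the SM-tree case, each exchange must be shown to preserve the contractibility constraint on pairwise intersections, which I expect to be the most delicate part of the argument since redistribution alters which trees share vertices along the fundamental cycle.
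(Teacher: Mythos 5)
Your first step (restricting each $T_i$ to the spanning tree $T$, which splits it into $r_i+1$ monotone pieces rooted at the vertices nearest the original root) matches the paper's cutting step exactly, and your observation that everything then lives inside the tree $T$ (so the SM condition is automatic) is fine. The genuine gap is the crux you yourself flag: the claim that a minimum (S)M-tree set can be normalized so that each absent edge is used by at most two trees, giving $\sum_i r_i \le 2g$. You give no proof of this, and the proposed exchange --- redistributing density ``along the fundamental cycle'' of $e=(u,v)$ --- does not obviously make sense: the trees using $e$ need not contain the tree path from $u$ to $v$ at all, and altering their values on that path would break the covering constraint $\sum_i f_i = f$ at those vertices unless compensated by other trees, which can cascade arbitrarily. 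As written, the bound $\sum_i r_i \le 2g$ is unsupported, and without it your construction only yields $k^* + \sum_i r_i$ trees, which could be as large as $k^*(g+1)$.

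The paper avoids needing any such normalization. It cuts \emph{every} occurrence of every edge of $E_{cut}$ in every tree of the minimum decomposition (so a single tree may be split into as many as $g+1$ pieces), and then \emph{merges} all resulting pieces that share the same root into one monotone tree: since all pieces are subtrees of $T$ and a sum of monotone subtrees of a tree rooted at a common vertex $u$ is again a monotone tree rooted at $u$, this merging is valid. Every new piece is rooted at an endpoint of some cut edge, so after merging there are at most two extra trees per cut edge, i.e.\ at most $k^* + 2g$ in total, with no structural assumption on how many trees of the optimal decomposition traverse a given cut edge. If you want to salvage your route, you would need to either prove your normalization claim (which looks at least as hard as the theorem) or replace it with this merge-by-root step.
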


\begin{proof}

We need to prove Lemma \ref{lem:2g} to provide an upper bound on $| \minMSet(T, f) |$ for any spanning tree $T \subseteq G$.  
Algorithm \ref{alg:treealgo} will then output an M-Tree Set of size at most equal to the upper bound, thus completing our proof.  The proof is identical for SM-Tree Sets.

\begin{lemma}\label{lem:2g}
Let $(G(V, E), f)$ be a density graph with $\beta_{1}G = g$ and \\ $|\minMSet(G, f)| = k^{*}$. 
 For any spanning tree $T \subseteq G$, $|\minMSet(T, f)| \leq k^{*} + 2g$
\end{lemma}

Let $M = \{(T_i, f_i)\}$ be a minimum M-Tree set of $(G, f)$.
Let $E_{cut}$ be the set of $g$ edges that if removed from $G$ leave spanning tree $T$.
Firstly, we note that $|\minMSet(G, f)| \leq |\minMSet(T, f)|$, as any M-Tree Set of $(T, f)$ is also an M-Tree Set of $(G, f)$.

We will construct an M-Tree Set of $(T, f)$ from a minimum M-Tree Set of $G$.
For each monotone tree $(T_i, f_i) \in M$, consider an edge $e_{j} = (u, v) \in E_{cut}$ that is in $T_i$.  There is implicit direction to $e_{j}$ with respect to the root of $(T_i, f_i)$, meaning either (1) $(u \rightarrow v)$ or (2) $(v \rightarrow u)$.  
If (1) is the case, we can cut the branch rooted at $v$ off of $(T_i, f_i)$ to create two non-intersecting monotone trees.  See Figure \ref{fig:monotone-cut} for an example. 
We perform a similar operation if (2) is the case, but instead cut the branch rooted at $u$.  
Perform this cut for each edge in $E_{cut}$ to divide $(T_i, f_i)$ into, at most, $|E_{cut}| + 1$ non-intersecting monotone trees.
After dividing each tree into at most $|E_{cut}| + 1$ non-intersecting monotone trees, we make 2 key observations - (1) we still have a M-Tree Set of $(G, f)$ and (2) no edge in $E_{cut}$ is in any monotone tree in the M-Tree Set.  Thus the M-Tree Set is also an M-Tree Set of $(T, f)$.

We can shrink the size of this M-Tree Set by summing the components that share the same root.  In particular, consider an edge $e_{j} = (u, v) \in E_{cut}$. We have created as many as $k^{*}$ additional monotone trees rooted at $u$ and as many as $k^{*}$ additional monotone trees rooted at $v$.  Sum the monotone trees rooted at $u$ to create a single monotone tree rooted at $u$.  
The sum would clearly still be a monotone tree because all monotone trees are subtrees of tree $T$, so no cycle or non-non-increasing path from $u$ will be created. 
We can similarly do the same for $v$, and for all edges in $E_{cut}$.  This we have a new M-Tree Set of $(T, f)$, with (at most) an additional monotone tree rooted at each node of each edge in $E_{cut}$ when compared to the original M-Tree Set of $G$.  Thus $|\minMSet(T, f)|$ is bounded above by $k^{*} + 2g$.
\end{proof}

\begin{figure*}
\begin{center}
\includegraphics[width = 0.8\linewidth]{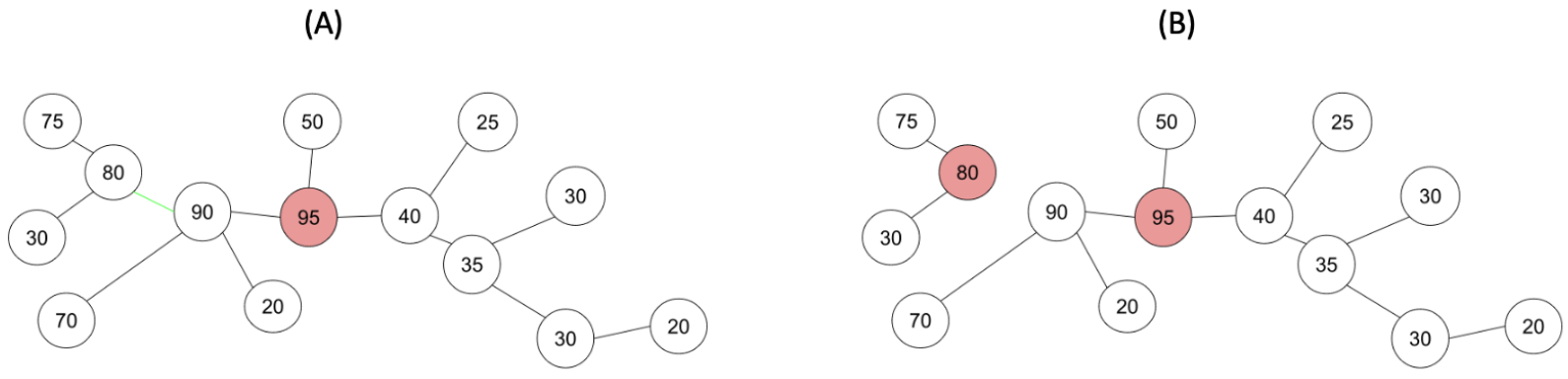}
\end{center}
\vspace*{-0.1in}
\caption{{\small (A) shows a single monotone tree with its root colored red and an edge colored green.  Cutting the green edge leaves us with two non-intersecting monotone trees shown in (B).
}}
\label{fig:monotone-cut}
\end{figure*}

\subsection{Approximation Algorithm for Minimum SM-Tree Sets of Density Cactus Graphs.}

A cactus graph is a graph such that no edge is part of more than one simple cycle \cite{harary1953number}.  See Figure \ref{fig:cactus-example} (A) for an example. 
Many problems that are NP-hard on graphs belong to P when restricted to cacti - such as vertex cover and independent set \cite{HARE1987437}. While we do not yet know whether or not computing a minimum M-Tree Set (or any variations) of density cactus graphs is NP-hard, we have developed a 3-approximation algorithm for computing the minimum SM-Tree Set of a density cactus graph. 

\begin{figure*}
\begin{center}
\includegraphics[width = 0.8\linewidth]{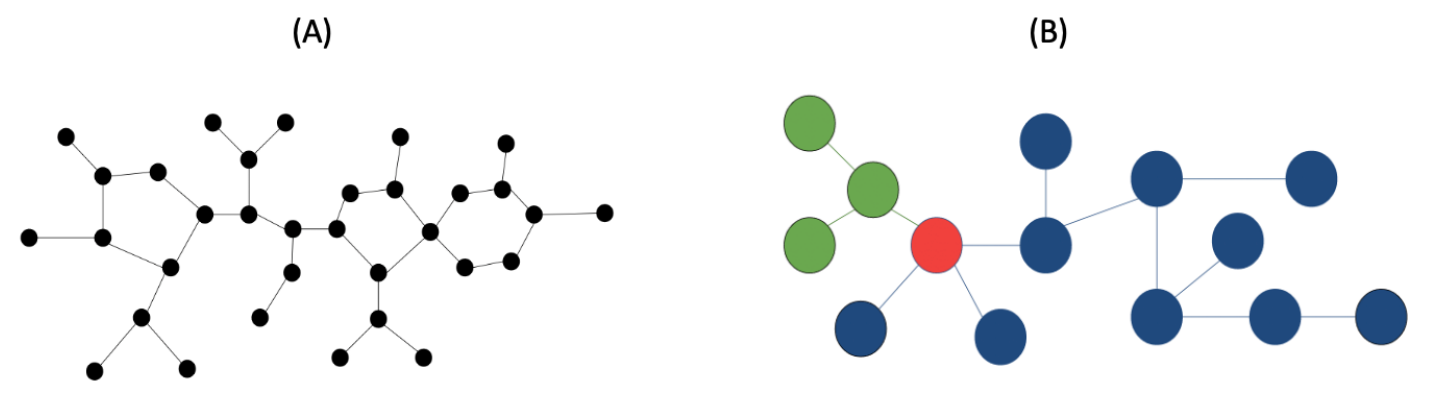}
\end{center}
\vspace*{-0.1in}
\caption{{\small (A) An example of a cactus graph - which is a graph such that no edge is part of more than a single simple cycle. It is essentially a tree of cycles graph - which is a graph such that no vertex is part of more than one simple cycle - with the exception that two simple cycles may share a single vertex.  Tree of cycles graphs are cactus graphs but cactus graphs (such as this one) are not necessarily tree of cycles graphs. (B) An example of an input for Algorithm \ref{alg:split-sweep}.  The density tree is broken into two subtrees (green and blue) that have a single node as intersection (red).  Monotone sweeping is performed iteratively at mode-forced nodes only in one of the subtrees.  Once the only remaining mode-forced nodes lie on the other tree, the output tuple containing the number of monotone sweeps performed and the remaining density at the intersection node is returned.
}}
\label{fig:cactus-example}
\end{figure*}

We first prove Theorem \ref{thm:sm-spanning-tree-3}, which states that for any density cactus graph $(G, f)$, there exists a spanning tree $T \subseteq G$ such that $| \minSMSet(T, f) |$ is at most 3 times $| \minSMSet(G, f) |$.

\begin{algorithm2e}
\KwIn{A density tree $(T(V, E), f)$ and two subtrees $T_1, T_2$ of $T$ that share a single node $v$ as intersection} 
\KwOut{ A tuple $(a, b)$ representing the number of monotone sweeps $a$ from mode-forced nodes on $T_1$ to make all mode-forced nodes on $T$ be part of $T_2$, and the remaining function value $b$ at $v$ after the monotone sweeps.}

    While there exists mode-forced node $u \in V$ off of $T_2$:
    
    - \monotonesweep{}($(T, f), u, f(u)$)
    
    Set $a = $ number of monotone sweeps performed
    
    Set $b = $ remaining density on $v$
    
    Return $(a, b)$
    
 \caption{\splitsweep{}((T(V, E), f), $T_1$, $T_2$) }
 \label{alg:split-sweep}
\end{algorithm2e}

\begin{theorem}\label{thm:sm-spanning-tree-3}
Let $(G(V, E), f)$ be a density cactus graph.  There exists a spanning tree $T$ of $G$ such that $| \minSMSet(T, f) | \leq 3 | \minSMSet(G, f) |$. 
\end{theorem}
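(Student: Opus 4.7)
The plan is to construct the spanning tree $T$ incrementally by deleting exactly one edge $e_C$ from each simple cycle $C$ of $G$. Because $G$ is a cactus, distinct cycles are edge-disjoint, so any such set of deletions yields a spanning subgraph that is still connected (hence a spanning tree), no matter how each $e_C$ is chosen individually. The content of the theorem is then that a coherent choice exists for which $|\minSMSet(T,f)| \leq 3\,|\minSMSet(G,f)|$.

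First I would reduce the global problem to a local analysis on a single cycle by exploiting the block-cut tree of the cactus: blocks that are bridges contribute no cycle and can be left untouched, so only simple-cycle blocks drive the choices. Fix one such cycle $C$ with vertices $v_0, v_1, \ldots, v_{k-1}$ and (possibly trivial) attached subtrees $S_0, \ldots, S_{k-1}$, and let $v^\ast \in C$ attain the maximum of $f$ on $C$. For each candidate cut edge $e_i \in C$, removing $e_i$ turns the cycle into a path with the two halves separated at $v^\ast$; I would invoke Algorithm \ref{alg:split-sweep} on each half (treating the two halves as subtrees meeting at $v^\ast$) to count how many SM-trees each half must absorb and how much density remains at $v^\ast$. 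The edge $e_C$ is then chosen to minimize the total count across the two halves, and Claim \ref{claim:sweep-as-much-as-possible} ensures that this greedy bookkeeping leaves an optimal remainder for the rest of the tree.

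To obtain the factor of $3$, I would compare $|\minSMSet(T,f)|$ to $|\minSMSet(G,f)|$ by partitioning the produced SM-trees into three classes: (i) trees rooted inside an attached subtree $S_i$, (ii) trees rooted on $C$ itself, and (iii) trees that must be newly introduced because the cut $e_C$ disconnects an SM-tree from a chosen optimal SM-Tree Set $\mathcal{M}^\ast$ of $(G,f)$. Class (i) is bounded by the corresponding subtree contribution in $\minSMSet(G,f)$, since Algorithm \ref{alg:split-sweep} and Claim \ref{claim:sweep-as-much-as-possible} guarantee greedy sweeps do no worse than an optimum on each subtree. Class (ii) is bounded by the number of local maxima of $f$ on $C$, each of which is a mode-forced vertex in every SM-Tree Set of $(G,f)$. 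Class (iii), the "repair" trees, is controlled by charging each such tree to the two trees of $\mathcal{M}^\ast$ whose arcs on $C$ meet near $e_C$. Summing the three classes over all cycles and aggregating through the block-cut tree yields the claimed factor of $3$.

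The main obstacle I expect is the analysis of class (iii). The SM contractibility constraint forces the arcs on $C$ traced out by the SM-trees of $\mathcal{M}^\ast$ to admit a laminar (non-crossing) structure, and the correct $e_C$ should be taken in the innermost uncovered region of this laminar family. Making this structural claim precise, and then verifying that the $\splitsweep$-based choice of $e_C$ actually realizes an edge used by only $O(1)$ trees of $\mathcal{M}^\ast$, is the crux of the proof. If that bound instead allowed a monotone tree of $\mathcal{M}^\ast$ to be split at many cycles (as a tree in the cactus can in principle traverse several cycle blocks), the multiplicative bound could blow up beyond $3$, so the laminar argument, together with the cactus block structure that keeps interactions between distinct cycles localized to single shared vertices, is what ultimately pins the constant at $3$.
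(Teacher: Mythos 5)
Your opening move---delete exactly one edge from each simple cycle, using edge-disjointness of cycles in a cactus to get a spanning tree---matches the paper, but from there your argument diverges and, more importantly, does not close. The three-class charging scheme is only sketched: you yourself flag the class (iii) laminar/charging argument as ``the crux'' and leave it unproven, and it is exactly the step where a multiplicative constant would have to be extracted. Your class (ii) bound is also not a bound against the optimum: the number of relative maxima of $f$ on a cycle $C$ is an \emph{upper}-bound-type quantity (it is what the naive algorithm charges), not a lower bound on $|\minSMSet(G,f)|$; a strict local maximum of $f$ on a graph with cycles need not be the root of any tree in an SM-Tree Set, since several trees entering through different neighbors can jointly supply its density. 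Finally, you import \splitsweep{} and Claim \ref{claim:sweep-as-much-as-possible} into what is a purely existential statement; those tools belong to the algorithmic result (Theorem \ref{thm:3-approx}), and using greedy sweeps to certify ``does no worse than an optimum on each subtree'' is itself an unproven claim for graphs with cycles.

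The paper's proof avoids all of this with a short global counting argument. Let $M$ be a minimum SM-Tree Set of size $k$ and let $G'=\bigcup_i T_i$ have first Betti number $g'$. Cutting one edge per cycle of $G$ (choosing, where possible, an edge absent from $G'$) costs at most $2$ extra trees per cycle of $G'$ and nothing for the other cycles, giving an upper bound of $k+2g'$ via Lemma \ref{lem:2g}. The key missing ingredient in your proposal is the matching \emph{lower} bound: the contractible-intersection condition forces at least $3$ trees of $M$ on any one cycle of $G'$ and at least one additional tree for each further cycle, so $k\ge g'+2$. The ratio $(k+2g')/k\le (3g'+2)/(g'+2)<3$ then gives the constant $3$ with no per-cycle charging, no laminar structure, and no analysis of how a single optimal tree threads through many cycles. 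Without a lower bound of this kind tying $g'$ to $k$, your additive-per-cycle losses cannot be converted into a multiplicative factor.
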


\begin{proof} Let $M = \{(T_i, f_i)\}$ be a minimum SM-Tree Set of $(G, f)$, $k$ = $|M|$, and $\beta_{1}G = g$. 
Consider graph $G' = \bigcup_{i = 1}^{k}T_i$. Let $\beta_{1}G' = g'$. We note that $g' \leq g$.
and that $M$ is also a minimum SM-Tree Set of $G'$.
We will use $G'$ to help construct a spanning tree $T$ of $G$ with an SM-Tree Set with the desired cardinality.
Note that if $G'$ has no cycles then there obviously exists a spanning tree $T$ of $G$ such that $| \minSMSet(T, f) | = | \minSMSet(G, f) |$.  Additionally, if $g' = 1$, then creating $T$ by removing any edge from the simple cycle $| \minSMSet(T, f) | \leq | \minSMSet(G, f) | + 2$ (similar arguments to Lemma \ref{lem:2g} and Theorem \ref{thm:opt_2g}). Therefore assume $g' \geq 2$. Construct spanning tree $T$ as follows:
\begin{itemize}
    \item Add each edge in $G$ that is not part of a simple cycle.
    \item For each simple cycle in $G$ that is not in $G'$, add all edges of cycle to $T$ except for one missing in $G'$ (it does not matter which if multiple such edges exist).
    \item For each simple cycle in $G$ that is in $G'$, add all edges of that cycle to $T$ except for one (does not matter which).
\end{itemize}

Let $k'$ be the $| \minSMSet(T, f) |$. 
$k'$ is bounded above by $k + 2g'$, because removing edges that aren't used in the any monotone tree in $M$ from the domain will not change $| \minSMSet(G', f) |$.
Additionally, removing an edge from a simple cycle in $g'$ will increase the $| \minSMSet(G', f) |$ by at most 2 (again by Lemma \ref{lem:2g}).

$k'$ is also bounded below by $2 + g'$.  For each cycle in $G'$, the number of monotone trees in M that contain nodes in a simple cycle must be at least 3 - otherwise the set cannot be an SM-Tree Set.  So consider a leaf cycle $C_0$ in $G'$.  
We know that there are at least 3 monotones trees in $M$ that cover $C_0$ . For a cycle $C_1$ adjacent to $C_0$ in $G'$ that there is a single path between the two cycles, and the monotone trees that cover $C_0$ cannot completely cover $C_1$, otherwise $M$ would not be an SM-Tree Set. There must be at least one monotone tree with nodes on $C_1$ and no nodes on $C_0$.  
Continuing traversing the graph to all cycles and it is clear that for each cycle there must be an additional monotone tree added to the SM-Tree Set.  Thus, we cannot have an SM-Tree Set of size less than $2 + g'$.

From above, we have $\frac{k'}{k} \leq \frac{k + 2g'}{k} \leq \frac{2+g'+2g'}{2+g'} \leq \frac{3g' + 2}{g' + 2} < 3$.
\end{proof}

With Theorem \ref{thm:sm-spanning-tree-3} proven, we aim to compute the optimal density spanning tree of a density cactus graph.  
To help compute such an optimal density spanning tree, we first define Algorithm \ref{alg:split-sweep}.  
Given a density tree $(T(V, E), f)$ divided into two subtrees $T_1$ and $T_2$ that share a single node $v \in V$ as intersection, Algorithm \ref{alg:split-sweep} performs monotone sweeping operations on the mode-forced nodes 
 of $T_1$ until all mode-forced nodes of $(T, f)$ are on $T_2$. An example of a valid input is seen in Figure \ref{fig:cactus-example} (B).
 The output is a tuple $(a, b)$, where $a$ is the number of monotone sweeps performed and $b$ is the remaining density on $v$ after performing the monotone sweeps.  
 The tuple will essentially capture how helpful monotone sweeping from $T_1$ is for building a minimum (S)M-Tree Set on $T_2$. 
Algorithm \ref{alg:split-sweep} can be used to help compute the desired density spanning tree. In particular, it is used in Algorithm \ref{alg:cactus-3}, to cut the optimal edge from each cycle, one cycle at a time.  We prove Theorem \ref{thm:3-approx} which states that Algorithm \ref{alg:cactus-3} returns an SM-Tree Set at most 3 times larger than a minimum SM-Tree Set of a density cactus graph.
The running time of Algorithm \ref{alg:cactus-3} is $O(n^3)$ where $n$ is the number of nodes in the input cactus graph.  Algorithm 2 ($O(n^2)$) is performed once for each edge ($O(n)$) that is part of a simple cycle.

\begin{algorithm2e}
\KwIn{Density cactus graph $(G(V, E), f)$ } 
\KwOut{ SM-Tree Set of $(G, f)$ }

    If $G$ is a tree 
    
    - Compute optimal (S)M-Tree Set of $(G, f)$ using Algorithm \ref{alg:treealgo}.

    Else If $G$ has only a single cycle 
    
    - compute optimal sized (S)M-Tree Set of each density spanning tree of $G$ and return smallest cardinality (S)M-Tree Set.

    Else (G has multiple simple cycles)

    - Compute a leaf cycle $C = c_1, \dots, c_m$ connected to rest of cycles at $c_i$

    - Let $G_{C} =$ the simple cycle $C$ with all branches off of each node in the cycle - not including the branches off of $c_i$ that do not lead to other cycles in the graph. Let $G_{\bar{C}}=$ $T - G_{C} + c_i$. 

    - Fix a spanning tree $T_{G_{\bar{C}}}$ of $G_{\bar{C}}$.

    - For each spanning tree $T_{i}$ of $G_{C}$ computing \splitsweep{}($(G_{C} \cup G_{\bar{C}}, f), T_i, T_{G_{\bar{C}}}$)

    - Set $G = G(V, E - e^{*})$ such that $e^{*}$ is edge removed from $C$ that results in spanning tree with smallest output of \splitsweep{}.

    - Iterate until basecase (single cycle graph) is achieved 
    
 \caption{\optspanningtreealgo((G(V, E), f)) }
 \label{alg:cactus-3}
\end{algorithm2e}

\begin{theorem}\label{thm:3-approx}
Algorithm \ref{alg:cactus-3} outputs an SM-Tree Set that is at most 3 times the size of a minimum SM-Tree Set of the input density cactus graph.
\end{theorem}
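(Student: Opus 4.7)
The plan is to argue by induction on the number $g$ of simple cycles of the input cactus graph $G$. The base cases $g=0$ (Algorithm \ref{alg:treealgo} is exactly optimal on trees) and $g=1$ (the algorithm tries every spanning tree explicitly and returns the best) yield an exact minimum SM-Tree Set and hence trivially a $3$-approximation. For the inductive step with $g \geq 2$, I would combine two ingredients: the existential guarantee of Theorem \ref{thm:sm-spanning-tree-3} that some spanning tree of $G$ is a $3$-approximation, together with a local optimality claim that the edge $e^{*}$ chosen by Algorithm \ref{alg:cactus-3} at the current leaf cycle $C$ is at least as good as any other single-edge cut from $C$ in terms of the tree-algo output on the resulting spanning tree.

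To make the local claim precise, I would argue as follows. Because $C$ is a leaf cycle, after any cut-edge choice on $C$ the only interaction between $G_C$ and the rest of the graph is through the single attachment vertex $c_i$. Algorithm \ref{alg:treealgo}, when run on the spanning tree, would process every mode-forced node lying in $G_C \setminus \{c_i\}$ before touching the rest, and this is exactly what Algorithm \ref{alg:split-sweep} simulates, returning the pair $(a,b)$ consisting of the number of sweeps used inside $G_C$ and the residual density remaining at $c_i$. By Claim \ref{claim:sweep-as-much-as-possible}, smaller residual at $c_i$ can only decrease the size of $\minSMSet$ on the rest; and fewer sweeps inside $G_C$ directly saves monotone trees. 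Hence the cut minimizing the \splitsweep{} output dominates every other cut on $C$, since its contribution $a$ plus the number of SM-trees needed to finish $G_{\bar{C}}$ with residual $b$ at $c_i$ is no larger than for any alternative.

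Iterating this local dominance as the algorithm recurses through leaf cycles one at a time yields a spanning tree $T^{\mathrm{alg}}$ of $G$ whose minimum SM-Tree Set (computed by Algorithm \ref{alg:treealgo} in the base case of the recursion) is at most $|\minSMSet(T, f)|$ for every spanning tree $T$ of $G$ obtainable by deleting one edge per simple cycle. In particular, $T^{\mathrm{alg}}$ is at least as good as the spanning tree $T^{+}$ guaranteed by Theorem \ref{thm:sm-spanning-tree-3} to satisfy $|\minSMSet(T^{+},f)| \leq 3|\minSMSet(G,f)|$, so the output of Algorithm \ref{alg:cactus-3} has size at most $3|\minSMSet(G,f)|$.

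The main obstacle I expect is the rigorous chaining of the local dominance across the sequence of leaf-cycle deletions. The \splitsweep{} output is a pair $(a,b)$ in which smaller $a$ and smaller $b$ both help, but they trade off differently against what happens downstream, and the proof must show the greedy choice at cycle $C$ does not inadvertently worsen the situation at a later cycle. I believe this follows from repeated application of Claim \ref{claim:sweep-as-much-as-possible} — any reduction in residual density propagating into a still-unprocessed portion of the cactus can only reduce the eventual number of SM-trees — but strengthening the inductive hypothesis so that the residual-density comparison transfers cleanly across iterations, and verifying that the algorithm's ordering of candidate tuples genuinely picks a dominating option, is the delicate technical step.
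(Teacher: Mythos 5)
Your proposal follows essentially the same route as the paper: base cases for trees and single cycles, the existential $3$-approximation spanning tree from Theorem \ref{thm:sm-spanning-tree-3}, and a local-dominance argument (via the \splitsweep{} pair $(a,b)$ and Claim \ref{claim:sweep-as-much-as-possible}) showing the greedy per-leaf-cycle edge cut is at least as good as any alternative, chained across cycles. The ``delicate technical step'' you flag --- rigorously propagating the local dominance through successive leaf-cycle deletions --- is also the step the paper itself passes over with a one-sentence assertion after Lemma \ref{lem:local-opt}, so your account matches the paper's proof in both structure and level of detail.
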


\begin{proof}

Clearly, the algorithm outputs a minimum SM-Tree Set when the input domain is a tree. 
When $G$ contains a single cycle, by Lemma \ref{lem:2g} the outputted SM-Tree Set will have at most 2 more monotone trees than a minimum SM-Tree Set of $G$. 
Therefore, we only need to prove Theorem \ref{thm:3-approx} holds when $G$ contains multiple simple cycles.
Because $G$ is a cactus a leaf cycle $C = c_1, \dots, c_m$ exists.
Let $G_{C}$ be the graph of all nodes in the cycle and branches off of those nodes, excluding branches off of $c_i$ that do not lead to other cycles.
Let $G_{\bar{C}}$ be the graph of $G$ excluding $C$ and all branches off of $C$, except for the node $c_i$ itself.
$G_{C}$ has $m$ spanning trees, $T_{1}, \dots, T_{m}$ corresponding to the $m$ edges of $C$. Fix a spanning tree $T_{G_{\bar{C}}}$ of $G_{\bar{C}}$.
We next introduce Lemma \ref{lem:local-opt}.  

\begin{lemma}\label{lem:local-opt}
Let $T^{*} = $ spanning tree of $G_{C}$ such that the output of \splitsweep{}($T^{*} \cup T_{G_{\bar{C}}}, T^{*}, T_{G_{\bar{C}}}$) is minimized.  
$| \minSMSet((T^{*} \cup T_{G_{\bar{C}}}, f)) | \leq$ \\ $| \minSMSet((T_{k} \cup T_{G_{\bar{C}}}, f)) |$ for any spanning tree $T_k \subseteq G_C$. 
\end{lemma}

\begin{proof}
Lemma \ref{lem:local-opt} is proven by proving Claim \ref{claim:easy-opt-tree} and Claim \ref{claim:harder-opt-tree}.

\begin{claim}\label{claim:easy-opt-tree}
If \splitsweep{}($T_{j} \cup T_{G_{\bar{C}}}, T_{j}, T_{G_{\bar{C}}}$)[0] $<$ \splitsweep{}($T_{k} \cup T_{G_{\bar{C}}}, T_{k}, T_{G_{\bar{C}}}$)[0] then $| \minSMSet((T_{j} \cup G_{\bar{C}}, f)) | \leq | \minSMSet((T_{k} \cup G_{\bar{C}}, f)) |$.
\end{claim}

\begin{proof}
Let $T_j, T_k$ be spanning trees of $G_C$ such that
$a_j < a_k$, where $a_j =$ \splitsweep{}($T_{j} \cup T_{G_{\bar{C}}}, T_{j}, T_{G_{\bar{C}}}$)[0] and $a_k =$ \splitsweep{}($T_{k} \cup T_{G_{\bar{C}}}, T_{k}, T_{G_{\bar{C}}}$)[0].
Let $s^{*} = | \minSMSet((T_{G_{\bar{C}}}, f)) |$.

Algorithm \ref{alg:split-sweep} performs Algorithm \ref{alg:treealgo} sweeping from mode-forced nodes on $T_j$, but stops once mode-forced nodes only remain on $T_{G_{\bar{C}}}$. Thus it is still constructing minimum SM-Tree Sets but stopping short of completion.
The first element of the output of Algorithm \ref{alg:split-sweep} indicates the number of iterations required to have only mode-forced nodes on $T_{G_{\bar{C}}}$.  
$| \minSMSet((T_{j} \cup T_{G_{\bar{C}}}, f)) | \leq a_j + s^{*}$.
Similarly, $| \minSMSet((T_{k} \cup T_{G_{\bar{C}}}, f)) | \geq a_k + s^{*} - 1$.  These bounds prove the claim.

\end{proof}

\begin{claim}\label{claim:harder-opt-tree}
If \splitsweep{}($T_{j} \cup T_{G_{\bar{C}}}, T_{j}, T_{G_{\bar{C}}}$)[0] $=$ \splitsweep{}($T_{k} \cup T_{G_{\bar{C}}}, T_{k}, T_{G_{\bar{C}}}$)[0] and \splitsweep{}($T_{j} \cup T_{G_{\bar{C}}}, T_{j}, T_{G_{\bar{C}}}$)[1] $<$ \splitsweep{}($T_{k} \cup T_{G_{\bar{C}}}, T_{k}, T_{G_{\bar{C}}}$)[1] then $| \minMSet((T_{j} \cup G_{\bar{C}}, f)) | \leq | \minMSet((T_{k} \cup G_{\bar{C}}, f)) |$.
\end{claim}

\begin{proof}
Let $T_j, T_k$ be spanning trees of $G_C$ such that $a_j = a_k$ and $b_j < b_k$ where $(a_j, b_j) =$ \splitsweep{}($T_{j} \cup T_{G_{\bar{C}}}, T_{j}, T_{G_{\bar{C}}}$) and $(a_k, b_k) =$ \splitsweep{}($T_{k} \cup T_{G_{\bar{C}}}, T_{k}, T_{G_{\bar{C}}}$).

$a_j = a_k$ indicates that both $T_j$ and $T_k$ require the same number of iteration of monotone sweeps to leave mode-forced nodes on $T_{G_{\bar{C}}}$.  However, $b_j < b_k$ means that $T_j$ is more helpful than $T_k$ for reducing the minimum SM-Tree Set size on the remainder in the same number of monotone sweeps by Claim \ref{claim:sweep-as-much-as-possible}.  This proves the claim.
\end{proof}

This completes the proof of Lemma \ref{lem:local-opt}.

\end{proof}

It follows from Lemma \ref{lem:local-opt} that Algorithm \ref{alg:cactus-3} outputs a minimum SM-Tree Set on the density spanning tree that has the smallest sized minimum SM-Tree Set of all density spanning trees. Combining this with Theorem \ref{thm:sm-spanning-tree-3} finishes the proof.

\end{proof}

\section{Conclusion}
\label{sec:Conclusion}

Decomposing density graphs into a minimum M-Tree Set becomes NP-Complete when the input graph is not restricted to trees.  We proved that computing the minimum M-Tree, CM-Tree, SM-Tree, and FM-Tree Set of density graphs is NP-Complete. We provided additive error approximations algorithms for the minimum M-Tree Set problem, as well as developed a 3-approximation algorithm for minimum SM-Tree Sets for density cactus graphs.
Future work will be to close the gap between the bounds of approximation we have established with the error bounds of the algorithms we have developed.

\bibliographystyle{plain}
\bibliography{refs.bib}

\begin{thebibliography}{10}

\bibitem{Aanjaneya11}
M.~Aanjaneya, F.~Chazal, D.~Chen, M.~Glisse, L.~Guibas, and D.~Morozov.
\newblock Metric graph reconstruction from noisy data.
\newblock In {\em Proc. 27th Sympos. Comput. Geom.}, pages 37--46, 2011.

\bibitem{BMWL20}
S.~Banerjee, L.~Magee, D.~Wang, X.~Li, B.~Huo, J.~Jayakumar, K.~Matho, M.~Lin,
  K.~Ram, M.~Sivaprakasam, J.~Huang, Y.~Wang, and P.~Mitra.
\newblock Semantic segmentation of microscopic neuroanatomical data by
  combining topological priors with encoder-decoder deep networks.
\newblock {\em Nature Machine Intelligence}, 2:585--594, 2020.

\bibitem{baryshnikov2018minimal}
Yuliy Baryshnikov and Robert Ghrist.
\newblock Minimal unimodal decompositions on trees.
\newblock {\em Journal of Applied and Computational Topology}, 4:199--209,
  2020.

\bibitem{CHS15}
Fr{\'e}d{\'e}ric Chazal, Ruqi Huang, and Jian Sun.
\newblock Gromov---hausdorff approximation of filamentary structures using
  reeb-type graphs.
\newblock {\em Discrete Comput. Geom.}, 53(3):621--649, April 2015.

\bibitem{smarttracing}
Hanbo Chen, Hang Xiao, Tianming Liu, and Hanchuan Peng.
\newblock Smarttracing: self-learning-based neuron reconstruction.
\newblock {\em Brain Informatics}, 2(3):135--144, 2015.

\bibitem{DWW18}
T.~K. Dey, J.~Wang, and Y.~Wang.
\newblock Graph reconstruction by discrete morse theory.
\newblock In {\em Proc. Internat. Sympos. Comput. Geom.}, pages 31:1--31:15,
  2018.

\bibitem{garey1979computers}
Michael~R Garey and David~S Johnson.
\newblock {\em Computers and intractability}, volume 174.
\newblock freeman San Francisco, 1979.

\bibitem{GSBW11}
Xiaoyin Ge, Issam~I. Safa, Mikhail Belkin, and Yusu Wang.
\newblock Data skeletonization via reeb graphs.
\newblock In J.~Shawe-Taylor, R.~S. Zemel, P.~L. Bartlett, F.~Pereira, and
  K.~Q. Weinberger, editors, {\em Advances in Neural Information Processing
  Systems 24}, pages 837--845. Curran Associates, Inc., 2011.

\bibitem{Hang_GTree2018}
Zhou Hang, Li~Shiwei, Li~Anan, Xiong Feng, Li~Ning, Han Jiacheng, Kang Hongtao,
  Chen Yijun, Li~Yun, Fang Wenqian, Liu Yidong, Lin Huimin, Jin Sen,
  Li~Zhiming, Xu~Fuqiang, Zhang Yu-hui, Lv~Xiaohua, Liu Xiuli, Gong Hui, Luo
  Qingming, Quan Tingwei, and Zeng Shaoqun.
\newblock Dense reconstruction of brain-wide neuronal population close to the
  ground truth.
\newblock {\em bioRxiv}, 2018.

\bibitem{harary1953number}
Frank Harary and George~E Uhlenbeck.
\newblock On the number of husimi trees: I.
\newblock {\em Proceedings of the National Academy of Sciences},
  39(4):315--322, 1953.

\bibitem{HARE1987437}
E.~Hare, S.~Hedetniemi, R.~Laskar, K.~Peters, and T.~Wimer.
\newblock Linear-time computability of combinatorial problems on
  generalized-series-parallel graphs.
\newblock In David~S. Johnson, Takao Nishizeki, Akihiro Nozaki, and Herbert~S.
  Wilf, editors, {\em Discrete Algorithms and Complexity}, pages 437--457.
  Academic Press, 1987.

\bibitem{Hastie84}
T.~J. Hastie.
\newblock {\em Principal curves and surfaces}.
\newblock PhD thesis, Stanford University, 1984.

\bibitem{Kegl02}
B.~K\'{e}gl and A.~Krzy\.{z}ak.
\newblock Piecewise linear skeletonization using principal curves.
\newblock {\em IEEE Trans. Pattern Anal. Machine Intell.}, 24:59--74, January
  2002.

\bibitem{kumar2000hardness}
VS~Kumar, Sunil Arya, and Hariharan Ramesh.
\newblock Hardness of set cover with intersection 1.
\newblock In {\em International Colloquium on Automata, Languages, and
  Programming}, pages 624--635. Springer, 2000.

\bibitem{LRW14}
Fabrizio Lecci, Alessandro Rinaldo, and Larry Wasserman.
\newblock Statistical analysis of metric graph reconstruction.
\newblock {\em J. Mach. Learn. Res.}, 15(1):3425--3446, January 2014.

\bibitem{Magee22}
L.~Magee and Y.~Wang.
\newblock Graph skeletonization of high-dimensional point cloud data via
  topological method.
\newblock {\em Journal of Computational Geometry}, 13:429--470, 2022.

\bibitem{MEGIDDO1982194}
Nimrod Megiddo and Arie Tamir.
\newblock On the complexity of locating linear facilities in the plane.
\newblock {\em Operations Research Letters}, 1(5):194--197, 1982.

\bibitem{Ozertem11}
U.~Ozertem and D.~Erdogmus.
\newblock Locally defined principal curves and surfaces.
\newblock {\em Journal of Machine Learning Research}, 12:1249--1286, 2011.

\bibitem{Quan_NeuroGPSTree_2016}
Tingwei Quan, Hang Zhou, Jing Li, Shiwei Li, Anan Li, Yuxin Li, Xiaohua Lv,
  Qingming Luo, Hui Gong, and Shaoqun Zeng.
\newblock Neurogps-tree: automatic reconstruction of large-scale neuronal
  populations with dense neurites.
\newblock {\em Nature Methods}, 13(1):51--54, 2016.

\bibitem{2011MNRAS}
Thierry Sousbie.
\newblock The persistent cosmic web and its filamentary structure – i. theory
  and implementation.
\newblock {\em Monthly Notices of the Royal Astronomical Society}, 414:350 --
  383, 06 2011.

\bibitem{Dingkang_skel}
Dingkang Wang, Lucas Magee, Bing-Xing Huo, Samik Banerjee, Xu~Li, Jaikishan
  Jayakumar, Meng~Kuan Lin, Keerthi Ram, Suyi Wang, Yusu Wang, et~al.
\newblock Detection and skeletonization of single neurons and tracer injections
  using topological methods.
\newblock {\em arXiv preprint arXiv:2004.02755}, 2020.

\bibitem{WWL15}
S.~Wang, Y.~Wang, and Y.~Li.
\newblock Efficient map reconstruction and augmentation via topological
  methods.
\newblock In {\em Proc.\ 23rd ACM SIGSPATIAL}, page~25. ACM, 2015.

\end{thebibliography}

\appendix

\section{Set Cover Intersection 1 Approximation Bound \cite{kumar2000hardness}}\label{app:sc1-bound}
\begin{theorem}\label{thm:sc1-bound} 
There exists a constant $c > 0$ such that approximating the SC-1 problem within a factor of $c\frac{log(n)}{log(log(n))}$, where $n$ is the number of elements in the universe, in deterministic polynomial time is possible only if $NP \subset DTIME(2^{n^{1 - \epsilon}})$ where $\epsilon$ is any positive constant less than $\frac{1}{2}$. 
\end{theorem}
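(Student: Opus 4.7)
The plan is to deduce this inapproximability from a corresponding hardness result for general Set Cover by means of a gap-preserving reduction to SC-1. The natural starting point is an instance of ordinary Set Cover for which approximating within a factor of $\Omega(\log n / \log \log n)$ is already hard under the assumption $NP \not\subseteq DTIME(2^{n^{1-\epsilon}})$. Such a statement is the quantitative Lund--Yannakakis / Feige style inapproximability tuned for a sub-exponential-time hypothesis, and it is the engine that drives essentially all $\log n$-type lower bounds for covering problems.

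First, I would fix a convenient hard source: Raz-parallel-repeated Label Cover with soundness $2^{-\Omega(k)}$ in $k$ repetitions, fed into Feige's partition-system reduction to produce a gap Set Cover instance with universe of size $N$ polynomial in the original parameter $n$. Because the DTIME assumption is robust under polynomial blow-ups, the conditional lower bound survives, and the ratio $\log N / \log \log N$ stays within a constant factor of $\log n / \log \log n$, which is exactly the form needed in the conclusion.

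Second, I would convert the resulting Set Cover instance into an SC-1 instance by a combinatorial design gadget. The idea is to replace each original element by a block of fresh elements indexed in a way that forces pairwise intersections of new sets to be at most one; a clean variant is to index copies of an element $e$ by the unordered pairs of original sets containing $e$, so that any two new sets share at most the single copy associated with their own pair. An alternative route uses a point-line incidence design over $\mathbb{F}_q$ to realize the same intersection property with better control of blow-up. Each new set $\widetilde S_i$ is then the union of the blocks corresponding to the original elements of $S_i$. Completeness is essentially tautological: a small cover in the original lifts to a cover of the same size in the SC-1 instance. Soundness demands the more delicate argument that every SC-1 cover of the new instance can be projected to a cover of the original of cardinality larger by only a constant factor.

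The main obstacle will be threading the needle between two competing requirements in the gadget step: (i) pairwise intersections of new sets must be at most $1$, so that the output genuinely lies in SC-1; and (ii) the blow-up in universe size must remain polynomial, so that $\log N / \log \log N$ and $\log n / \log \log n$ agree up to a constant factor and the conditional lower bound transfers without loss. Too loose a design destroys the intersection constraint, while too tight a design explodes the universe and consumes the DTIME slack. Producing a polynomial-size construction that simultaneously enforces the SC-1 property and admits a soundness argument losing only a constant factor in the optimal cover size is the crux of the proof; once that construction is in hand, chasing the gap through the reduction and solving for the final constant $c$ is mechanical.
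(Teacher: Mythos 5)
This statement is not proved in the paper at all: it is quoted verbatim from Kumar, Arya and Ramesh \cite{kumar2000hardness}, and Appendix \ref{app:sc1-bound} exists only to record the imported bound that Corollary \ref{cor:approx-bound} relies on. So there is no in-paper argument to compare yours against; the question is whether your sketch would constitute a correct proof of the external result, and it would not.

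The gadget you propose breaks down. If you replace each element $e$ by copies indexed by unordered pairs of original sets containing $e$, and put $e_{\{i,j\}}$ into both $\widetilde S_i$ and $\widetilde S_j$, then two sets with $|S_i\cap S_j|=t$ satisfy $\widetilde S_i\cap\widetilde S_j \supseteq \{e_{\{i,j\}} : e\in S_i\cap S_j\}$, a set of size $t$, not $1$; the intersection-1 property is not achieved. Worse, both directions of gap preservation fail: a single original set $S_i$ covering $e$ lifts to $\widetilde S_i$, which covers only the copies $e_{\{i,\cdot\}}$ and misses every copy $e_{\{j,k\}}$ with $j,k\neq i$, so completeness is false; and covering all $\binom{d}{2}$ copies of a degree-$d$ element forces a vertex cover of $K_d$ among the lifted sets, i.e.\ $d-1$ of the $d$ sets containing $e$, so the optimum of the new instance is not within a constant factor of the old one. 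These are not repairable details of an otherwise sound plan: no approximation-preserving reduction from general Set Cover to SC-1 is known, and the fact that the best known hardness for SC-1 ($\Theta(\log n/\log\log n)$) is strictly weaker than the $(1-o(1))\ln n$ hardness for unrestricted Set Cover is evidence that such a reduction should not exist. The actual proof in \cite{kumar2000hardness} does not post-process a Set Cover instance; it re-engineers the Lund--Yannakakis/Feige construction directly from the multi-prover proof system, building partition systems that themselves satisfy the intersection-1 property, and obtaining the deterministic $DTIME(2^{n^{1-\epsilon}})$ form requires a further nontrivial derandomization of those partition systems. The step you flag as ``the crux'' is precisely the entire content of that paper, and your sketch does not supply it.
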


\section{Complexity}
\label{app:Complexity}

In this section, we prove Theorem \ref{thm:var-npcomp}, which states many variations of the minimum M-Tree set problem are also NP-Complete.

\subsection{Proof of Theorem \ref{thm:var-npcomp}: CM-Tree Sets}
Firstly, the problem is clearly in NP.  We will follow the same reduction from SC-1 as seen in the proof of Theorem \ref{thm:npcomp} to prove NP-Hardness, with one additional step.

The first direction we follow identical arguments to create an M-Tree Set of appropriate size, but do not yet have a CM-Tree Set.
In particular, consider the M-Tree Set at the end of the proof - the only possible edges missing are edges $(a_{S_j}, b_{e})$ such that $S_j$ is in the Set Cover and contains $e$, but another set $S_i$ containing $e$ is in the Set Cover and $f_i(b_e) = 1$.
We will modify the M-Tree Set to ensure every such edge that is left out is included in a component.
Consider an element $e$ that is in $n$ sets in the set cover, where $n > 1$.
Let $(T_i, f_i)$ be the monotone tree in the M-Tree Set such that $f_i(b_e) = 1$.  
Set $f_i(b_e) = \frac{1}{n}$.
Additionally, for each set $S_j$ in the set cover such that $b_e \in S_j$, add $(a_{S_j}, b_e)$ to monotone tree $(T_j, f_j)$ and set $f_j(b_e) = \frac{1}{n}$.  Then, for each set $S_k$ such that $S_k$ is not in the set cover and $b_e \in S_k$, add $(b_e, a_{S_k})$ to each $(T_j, f_j)$ and set $f_j(a_{s_k}) = \frac{1}{n}$.
We still have an M-Tree Set, as each component clearly remains a monotone tree, and the sum of function values at each node is equal to what it was prior to the modification.
Once this modification is performed for every element contained within multiple sets in the set cover, we have an M-Tree set with every edge in the input domain included in at least one monotone tree.
The second direction is identical to the previous proof.

\subsection{Proof of Theorem \ref{thm:var-npcomp}: SM-Tree Sets}
Firstly, the problem is clearly in NP.
In order to prove this decision problem is NP-Hard - we first show that a specific instance of Vertex Cover - where for the given input graph $G(V, E)$, for any two verts $u, v \in V$, there is at most one vert $w \in V$ that is adjacent to both $u$ and $v$ - is NP-Complete.  This will limit the number of connected components in the intersection between two components to be at most one in our reduction to the M-Tree problem.

\begin{lemma}\label{thm:modified-vertex-cover-np-complete}
Given a graph $G(V, E)$ such that for any two verts $u, v \in V$, there is at most one vert $w \in V$ that is adjacent to both $u$ and $v$ and an integer $k$, determining whether or not there exists a vertex cover of size $\leq k$ is NP-Complete.
\end{lemma}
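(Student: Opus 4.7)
The problem is in NP since any candidate vertex cover can be verified in polynomial time, so I focus on NP-Hardness. The plan is to reduce from standard (unrestricted) Vertex Cover. Given an arbitrary instance $(G(V,E), k)$, construct $G'$ by \emph{subdividing every edge twice}: replace each edge $e = \{u,v\} \in E$ with an internally disjoint path $u - x_e - y_e - v$ of length three, where $x_e, y_e$ are fresh vertices. This construction is polynomial in size, and I claim that $G$ has a vertex cover of size at most $k$ if and only if $G'$ has one of size at most $k + |E|$, while $G'$ automatically satisfies the required structural restriction.

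The structural check comes first: I would verify that no two vertices of $G'$ share more than one common neighbor (equivalently, $G'$ has no $4$-cycle). The key observation is that every interior subdivision vertex has degree two and both of its neighbors lie on the same subdivision path, so any cycle in $G'$ must traverse each subdivision path it enters in its entirety. Thus cycles in $G'$ correspond bijectively to cycles in $G$ with lengths multiplied by three, giving $\mathrm{girth}(G') \geq 3 \cdot \mathrm{girth}(G) \geq 9$, which rules out any $C_4$.

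The cover-size correspondence $\tau(G') = \tau(G) + |E|$, where $\tau$ denotes minimum vertex cover size, is established in two directions. For the forward direction, a cover $C$ of $G$ extends to a cover of $G'$ by appending one subdivision vertex per edge $e = \{u,v\}$, chosen as $y_e$ when only $u \in C$, as $x_e$ when only $v \in C$, and arbitrarily when both endpoints are in $C$; a direct case check confirms all three edges of each subdivision path are covered. For the reverse direction, take an optimal cover $C'$ of $G'$ and let $E_2 = \{e : \{x_e, y_e\} \subseteq C'\}$. For each $e = \{u,v\} \notin E_2$, exactly one of $x_e, y_e$ lies in $C'$, which forces the opposite endpoint edge to be covered by one of $u$ or $v$, placing at least one of them in $C'$. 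Setting $C = (C' \cap V) \cup \{w_e : e \in E_2\}$ for arbitrary endpoints $w_e$ yields a vertex cover of $G$, and the identity $|C'| = |C' \cap V| + |E| + |E_2|$ gives $|C| = |C'| - |E|$.

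Combining the two directions completes the reduction. The most delicate step is the reverse-direction accounting, where I must track edges whose subdivision paths consume one versus two vertices of $C'$ and argue that the surplus contributions from $E_2$-edges are exactly balanced by the freedom to charge one original endpoint per such edge to the vertex cover of $G$.
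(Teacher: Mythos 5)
Your proof is correct, but it takes a genuinely different route from the paper. The paper reduces from a restricted form of 3-SAT (every clause has three distinct literals and never contains a literal together with its negation) via the classical 3-SAT-to-Vertex-Cover gadget of Garey and Johnson, and then verifies by case analysis on the gadget graph that no two vertices share more than one common neighbor. You instead reduce from unrestricted Vertex Cover by subdividing each edge twice, using the standard fact that a double subdivision raises the vertex cover number by exactly one; the structural condition then follows for free because two vertices with two common neighbors would form a $C_4$, and your girth argument shows $G'$ has girth at least $9$. Your route is more self-contained (no auxiliary SAT variant to justify, no gadget case analysis) and proves something strictly stronger, namely hardness on graphs of large girth; the paper's route has the advantage of staying within the reduction it already cites. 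Both your directions of the size correspondence check out; the only nitpick is that in the reverse direction you should write $|C| \leq |C'| - |E|$ rather than equality, since the chosen endpoint $w_e$ for an edge in $E_2$ may already lie in $C' \cap V$ --- but the inequality is all that is needed, so this does not affect correctness.
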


\begin{proof}

This is a specific instance of Vertex Cover and is clearly in NP.  To show it is in NP-Hard use the same reduction from 3-SAT to regular Vertex Cover as seen in \cite{garey1979computers}, but use a "restricted" version of 3-SAT where we can assume the following:
\begin{itemize}
    \item A clause has 3 unique literals
    \item A clause cannot have a literal and its negation
\end{itemize}
These assumptions are safe because we can transform any 3-SAT instance that has any such clauses to an equivalent 3-SAT instance with no such clauses in polynomial time.  Thus this restricted version of 3SAT is also NP-Complete.

Consider the graph created in the reduction of this restricted 3-SAT to Vertex Cover.  Two literal vertices will have no shared neighbors by design.  Any literal vertex and vertex in a clause will only have 1 neighbor if the literal vertex is also in the clause or the clause vertex is the negation of the literal vertex.  Two vertices in clauses will only have a single shared neighbor - the literal vertex if they are the same literal (and are thus in different clauses), or the final clause vertex if they are in the same clause. Thus, the reduction is also a reduction to our special Vertex Cover problem, and follows the exact same proof. 
\end{proof}

We will now reduce the special instance of Vertex Cover to the SM-Tree decision problem to prove NP-Hardness.
We follow a very similar reduction as seen in the proof of Theorem \ref{thm:npcomp}.  
We construct a bipartite graph $G(V = A \cup B, E)$, with nodes in $A$ corresponding to nodes in the instance of Vertex Cover, and nodes in $B$ representing edges in the instance of Vertex Cover.  
We then build density function $f$ on the nodes of $V$, setting $f(a_v) = $ degree of node $v$ in Vertex Cover instance for each node $a_v \in A$, and $f(b_e) = 1$ for each edge $e$ in the Vertex Cover instance.
Prove both directions the exact same way as shown in the proof of Theorem \ref{thm:npcomp}, but note that for the first direction, because of the restriction on our input graph, any two components of the decomposition can have at most a single vertex in their intersection.

\subsection{Proof of Theorem \ref{thm:var-npcomp}: FM-Tree Sets}
Firstly, the problem is clearly in NP.
To show the problem is in NP-Hard, we follow the exact same reduction from SC-1 as seen in proof of Theorem \ref{thm:npcomp}.  
For the first direction - we note that the M-Tree Set we have constructed is also an SM-Tree Set - as each set $S_i$ in the set cover is a root of a component $(T_i, f_i)$ such that $f_i(a_{S_i}) = f(a_{S_i})$.  
The second direction remains the same - though the argument that if a node $b_e$ is not the maximum of any monotone tree then one of its neighbors must be is slightly different.  
In this case, the neighbor must be a maximum in the same monotone tree it is a parent of $b_e$ in - not being so would contradict that the set is in fact a FM-Tree Set.

\section{Algorithms}\label{app:Algorithms}

\subsection{Naive Approximation Algorithm}

As stated in the main paper, given a density graph $(G, f)$, a natural upper bound for $| \minSMSet(G, f) |$ is the number of relative maxima on the density graph.  Algorithm \ref{alg:naive} constructs monotone trees rooted at each relative maxima on the input density graph.  
Starting at a root, depth-first search (DFS) is performed to reach every node that can be reached via a non-increasing path from the root.  
DFS stops once no nodes remain or all remaining nodes are not reachable from the root via a non-increasing path.  We call this graph traversal algorithm \textit{monotone DFS}.  
Perform monotone DFS from each relative maxima to build an M-Tree Set.  The M-Tree Set will have size at most one less than the number of relative maxima more than the size of a minimum M-Tree Set. 
Figure \ref{fig:naive-example} shows an example output of Algorithm \ref{alg:naive}.

\begin{algorithm2e}
\KwIn{A density graph $(G(V, E), f)$ } 
\KwOut{ An (S)M-Tree set of $(G(V, E), f)$ }

    (Step 1) Compute set $M$ containing the relative maxima of $f$ on $G$.
    
    (Step 2) For each relative maxima $m_i \in M$, perform monotone DFS to build a component $(T_i, f_i)$
    
    (Step 3) Return all $T_i, f_i$
    
 \caption{\naivealgo{}((G(V, E), f)) }
 \label{alg:naive}
\end{algorithm2e}

\begin{figure*}
\begin{center}
\includegraphics[width = 0.8\linewidth]{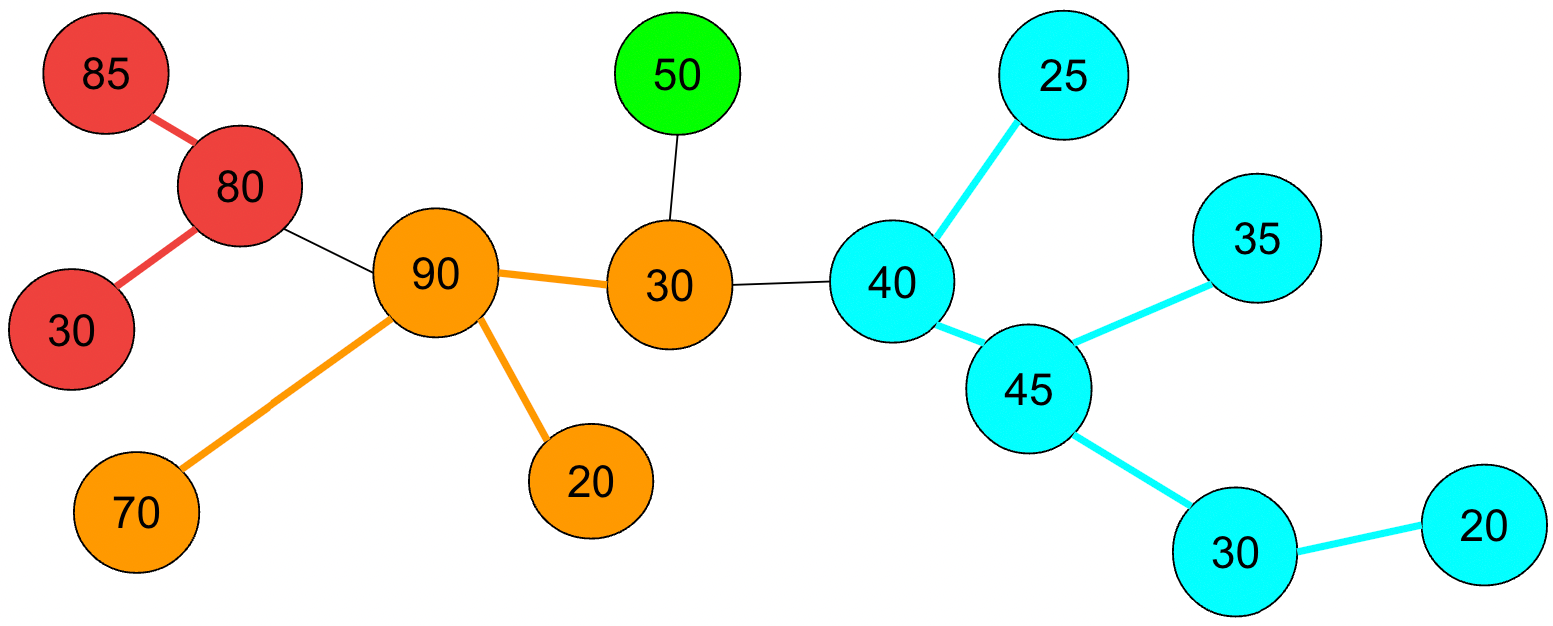}
\end{center}
\vspace*{-0.1in}
\caption{{\small An M-Tree Set with of a density tree with 4 monotone trees computed by Algorithm \ref{alg:naive}.
}}
\label{fig:naive-example}
\end{figure*}

\end{document}